\newcommand{\RR}{{\mathbb{R}}}
\newcommand{\CC}{{\mathbb{C}}}
\newcommand{\tr}{{\mathrm{tr}}}
\newcommand{\abs}[1]{\left\lvert#1\right\rvert}
\newcommand{\norm}[1]{\left\lVert#1\right\rVert}
\newcommand{\loor}[1]{\left\lfloor#1\right\rfloor}
\newcommand{\Expectation}{{\mathbb{E}}}
\newcommand{\Expect}[2]{{\Expectation_{#1}\left[#2\right]}}
\DeclareMathOperator{\SINR}{SINR}
\newcommand{\yy}{{\bf y}} 
\newcommand{\YY}{{\bf Y}}
\newcommand{\WW}{{\bf W}} 
\newcommand{\PP}{\sqrt{\bf P}} 
\newcommand{\SIGNAL}{{\bf s}} 
\newcommand{\NOISE}{{\bf n}} 
\newcommand{\HC}{{\bf C}}
\newcommand{\HH}{{\bf H}}
\newcommand{\VV}{{\bf V}}
\newcommand{\Vv}{{v}}
\newcommand{\vv}{{\bf v}}
\newcommand{\ww}{{\bf w}}
\newcommand{\powcoef}[1]{\sqrt{P_{#1}}}
\newcommand{\signal}{s}
\newcommand{\FF}{{\bf F}}
\newcommand{\FH}{{\FF^H}}
\newcommand{\DD}{{\bf H}}
\newcommand{\ddd}{{h}}
\newcommand{\uuu}{{w}}
\newcommand{\dd}[2]{{\ddd_{{#2}{#1}}}}
\newcommand{\uu}[2]{{\uuu_{{#2}{#1}}}}
\newcommand{\rrho}{{\varrho}}
\newcommand{\E}{E}
\newcommand{\betastar}{\beta^{\star}}
\newcommand{\betaplus}{\beta^{+}}
\newcommand{\betamf}{\beta^{\textrm{MF}}}
\newcommand{\betammse}{\beta^{\textrm{MMSE}}}
\newcommand{\betasic}{\beta^{\textrm{SIC}}}
\newcommand{\Pmax}{P_{\text{max}}}
\newcommand{\hmax}{h_{\text{max}}}
\newcommand{\PA}{{\mathbb P}}
\newcommand{\p}{{\bf p}}
\newcommand{\Cste}{\Omega}
\newcommand{\R}{{{\bf R}}}
\newcommand{\Gam}{{C}} 
\newcommand{\Gammmse}{\Gam^{\textrm{MMSE}}}
\newcommand{\Gamopt}{\Gam^{\textrm{OPT}}}
\newcommand{\Gami}{{\gamma}}
\newcommand{\Gamoptstar}{\Gam^{\star}}
\DeclareMathOperator{\Trace}{Trace}
\newcommand{\HHu}[1]{{{\bf h}_{#1}}}
\newcommand{\Hu}[1]{{{\HH}_{(-#1)}}}
\newcommand{\WWu}[1]{{{\bf w}_{#1}}}
\newcommand{\Wu}[1]{{{\WW}_{(-#1)}}}
\newcommand{\Pu}[1]{{\PP}_{(-#1)}}
\newcommand{\Gu}[1]{{\GG}_{(-#1)}}
\newcommand{\GuH}[1]{{\GG}_{(-#1)}^H}
\newcommand{\Id}{{\bf I}}
\newcommand{\g}{{\bf g}}
\newcommand{\GG}{{\bf G}}
\newcommand{\ala}{\eta}
\newcommand{\aaa}[2]{{\ala_{#1}({#2})}}
\newcommand{\ttau}[2]{{\tau_{#1}({#2})}}
\newtheorem{lemma}{Lemma}
\newtheorem{proposition}{Proposition}
\newtheorem{theorem}{Theorem}
\newtheorem{definition}{Definition}
\begin{document}
\bibliographystyle{/user/nbonneau/home/latex/IEEEtran}

\title{Non-atomic Games for Multi-User Systems}

\author{\authorblockN{Nicolas Bonneau\authorrefmark{1},
M{\'e}rouane Debbah\authorrefmark{2},
Eitan Altman\authorrefmark{1}, and
Are Hj{\o}rungnes\authorrefmark{3}\\}
\authorblockA{\authorrefmark{1} MAESTRO, INRIA Sophia Antipolis,
2004 Route des Lucioles, B.P. 93,
06902 Sophia Antipolis, France\\
Email: \url{{nicolas.bonneau, eitan.altman}@sophia.inria.fr}\\}
\authorblockA{\authorrefmark{2}Mobile Communications Group, Institut Eurecom,
2229 Route des Cretes, B.P. 193,
06904 Sophia Antipolis, France\\
Email: \url{merouane.debbah@eurecom.fr}\\}
\authorblockA{\authorrefmark{3}UniK--University Graduate Center,  University of Oslo,
Instituttveien 25, P.~O.~Box 70, N-2027 Kjeller, Norway\\
Email: \url{arehj@unik.no}}
}

\maketitle

\begin{abstract}
In this contribution, the performance of a multi-user system is
analyzed in the context of frequency selective fading channels.  Using
game theoretic tools, a useful framework is provided in order to
determine the optimal power allocation when users know only their own
channel (while perfect channel state information is assumed at the
base station). We consider the realistic case of frequency selective
channels for uplink CDMA. This scenario illustrates the case of
decentralized schemes, where limited information on the network is
available at the terminal.  Various receivers are considered, namely
the Matched filter, the MMSE filter and the optimum filter. The goal
of this paper is to derive simple expressions for the non-cooperative
Nash equilibrium as the number of mobiles becomes large and the
spreading length increases. To that end two asymptotic methodologies
are combined. The first is asymptotic random matrix theory which
allows us to obtain explicit expressions of the impact of all other
mobiles on any given tagged mobile.  The second is the theory of
non-atomic games which computes good approximations of the
Nash equilibrium as the number of mobiles grows. \footnote{This work
was supported by the BIONETS project
\url{http://www.bionets.org/} and by the Research Council of Norway
through the OPTIMO project ``Optimized Heterogeneous Multi-user MIMO Networks''. }
\end{abstract}

\section{Introduction}
\label{sec:introduction}

Resource allocation is of major interest in the context of multi-user
systems. In the uplink multi-user systems, it is important for users
to transmit with enough power to achieve their requested quality of
service, but also to minimize the amount of interference caused to
other users. Thus, an efficient power allocation mechanism allows to
prevent an excessive consumption of the limited ressources of the
users.

The most straightforward way to design a power allocation (PA)
mechanism is as a centralized procedure, with the base station
receiving training sequences from the users and signaling back the
optimal power allocation for each user. Power control schemes in
cellular systems were first introduced for TDMA/FDMA
\cite{ieeeVT.zander92,ieeeVT.grandhi93}; more recently an optimal scheme
was derived for Code Division Multiple Access (CDMA)
\cite{ieeeVT.wu99}. In order to achieve the optimal capacity, the users
may also be sorted according to some rule of precedence
\cite{ieeeIT.tse98}. However, this involves a non negligible overhead
and numerous non informational transmissions. In addition, the
complexity of centralized schemes increases drastically with the number of
users. As discussed in \cite{ieeeAP.elosery00}, centralized algorithms
generally do not have a practical use for real systems, but provide
useful bounds on the performance that can be attained by distributed
algorithms.

A way to avoid the constraints of a centralized procedure is to
implement a decentralized one where each user calculates its estimation
of the optimal transmission power according to its local knowledge of
the system. This is, for example, the case in ad-hoc networks
applications. Most of the time, a distributed algorithm means an
iterative version of a centralized one. Mobiles update their power
allocation according to some rule based on the limited information
they retrieve from the system. Supposing that an optimal power
allocation exists, a distributed iterative algorithm is derived from a
differential equation in \cite{ieeeVT.foschini93} and its convergence
is proven analytically.  A distributed version of the algorithm of
\cite{ieeeVT.grandhi93} is presented in
\cite{ieeeCOM.grandhi94}. Building on these results, a general
framework for power control in cellular systems is given in
\cite{ieeeJSAC.yates95}. A review of different methods of centralized
and distributed power control in CDMA systems is given in
\cite{ieeeAP.elosery00}.

In this context, a natural framework is game theory, which studies
competition (as well as cooperation) between independent actors.
Tools of game theory have already been frequently used as a central
framework for modeling competition and cooperation in networking, see
for example \cite{paper.altman05} and references therein.  Building on
the framework of \cite{ieeeJSAC.yates95}, a game theoretic approach
was introduced in \cite{ieeePC.goodman00,ieeeCOM.mackenzie01}.
Numerous works on power allocation games have followed since, a selection
of which we present in Sec. \ref{sec:related}.

Game theory can be used to treat the case of any number of
players. However, as the size of the system increases, the number of
parameters increases drastically and it is difficult to gain insight
on the expressions obtained.

In order to obtain expressions depending only on few parameters, we
consider the system in an asymptotic setting, letting both the number
of users and the spreading factor tend to infinity with a fixed
ratio. We use tools of random matrix theory \cite{book.tulino04} to
analyze the system in this limit. Random matrix theory is a field of
mathematical physics that has been recently applied to wireless
communications to analyze various measures of interest such as
capacity or Signal to Interference plus Noise Ratio
(SINR). Interestingly, it enables to single out the main parameters of
interest that determine the performance in numerous models of
communication systems with more or less involved models of attenuation
\cite{ieeeIT.verdu99,ieeeIT.tse99,ieeeIT.shamai01,ieeeIT.tulino05}.
In addition, these asymptotic results provide good approximations
for the practical finite size case, as shown by simulations.

In the asymptotic regime, the non-cooperative game becomes a
non-atomic one, in which the impact (through interference) of any
single mobile on the performance of other mobiles is negligible.  In
the networking game context, the related solution concept is often
called Wardrop equilibrium \cite{paper.wardrop52}; it is often much
easier to compute than the original Nash equilibrium
\cite{paper.altman05}, and yet, the former equilibrium is a good
approximation for the latter, see details in
\cite{paper.haurie85}.  In this paper, we derive the non-atomic equilibrium, 
which generally corresponds to a non-uniform PA for the users. 

The non-atomic Nash equilibrium is studied in this paper for several
linear receivers, namely the matched filter and the MMSE filter, as
well as non-linear filters, such as the successive interference
cancellation (SIC) \cite{ieeeJSAC.muller01} version of those
filters. However, in order to perform SIC, the users need to know
their decoding order, in order to adjust their rates. In this paper,
we introduce ways of obtaining an ordering of the users in a
distributed manner. The ordering can be determined simply in a
distributed manner under weak hypotheses. This gives rise to a
different kind of power allocation, that depend explicitly on the
order in which the users are decoded.

Moreover, we quantify the gain of the non-uniform PA with respect
to uniform PA, according to the number of paths.  The originality of
the paper lies in the fact that we show that as the number of paths
increases, the optimal PA becomes more and more uniform due to the
ergodic behavior of all the CDMA channels. This is reminiscent of an
effect (``channel hardening'') already revealed in MIMO
\cite{ieeeIT.hochwald04}. The highest gain (in terms of utility) is
obtained in the case of flat fading (which also favors dis-uniform
power allocation between the users).

The layout of this paper is the following. First, a detailed account
of related works is made in Sec.~\ref{sec:related}. In order to be
self-contained, we introduce useful notations and concepts of random
matrix theory in Sec.~\ref{sec:notations}. The communication model
that will be used throughout the paper is detailed in
Sec.~\ref{sec:model}. Asymptotic SINR and capacity expressions are
given in Sec.~\ref{sec:SINR}. The particular game played between users
is introduced in Sec.~\ref{sec:nash}, along with the existence of a
Nash equilibrium. Finally, theoretical results for the power
allocation are derived in Sec.~\ref{sec:powerallocation} for unordered
users and Sec.~\ref{sec:SIC} when there is an ordering of the
users. Analytical results are matched with simulations in
Sec.~\ref{sec:simul}. Conclusions are provided in Sec.

\section{Related Work}
\label{sec:related}

This section is dedicated to present some of the works that use game
theory for power control. We remind that a Nash equilibrium is a
stable solution, where no player has an incentive to deviate
unilaterally, while a Pareto equilibrium is a cooperative dominating
solution, where there is no way to improve the performance of a player
without harming another one. Generally, both concepts do not coincide.
Following the general presentation of power allocation games in
\cite{ieeePC.goodman00,ieeeCOM.mackenzie01}, an abundance of works can
be found on the subject.

In particular, the utility generally considered in those articles is
justified in \cite{wcnc.rodriguez03} where the author describes a
widely applicable model ``from first principles''. Conditions under
which the utility will allow to obtain non-trivial Nash equilibria
(i.e., users actually transmit at the equilibrium) are derived. The
utility consisting of throughput-to-power ratio (detailed in
Sec. \ref{sec:nash}) is shown to satisfy these conditions. In
addition, it possesses a propriety of reliability in the sense that
the transmission occurs at non-negligible rates at the
equilibrium. This kind of utility function had been introduced in
previous works, with an economic leaning \cite{these.ji97,ieeeCOM.saraydar02}.

Unfortunately, Nash equilibria often lead to inefficient allocations,
in the sense that higher rates (Pareto equilibria) could  be  obtained for all mobiles if they
cooperated. To alleviate this problem, in addition to the non-cooperative game setting,
\cite{ieeeCOM.saraydar02} introduces a pricing strategy to force users to
transmit at a socially optimal rate. They obtain communication at Pareto equilibrium. 

In \cite{ieeeCOM.meshkati05}, defining the utility as advised in
\cite{wcnc.rodriguez03} as the ratio of the throughput to the
transmission power, the authors obtain results of existence and
unicity of a Nash equilibrium for a CDMA system. They extend 
this work to the case of multiple carriers in \cite{ieeeJSAC.meshkati06}.
In particular, it is shown that users will select and only transmit over their best carrier. 
As far as the attenuation is concerned, the consideration is restricted to flat fading  in
\cite{ieeeCOM.meshkati05} and in \cite{ieeeJSAC.meshkati06} 
(each carrier being flat fading in the latter).  However,
wireless transmissions generally suffer from the effect of multiple
paths, thus becoming frequency-selective. The goal of this
paper is to determine the influence of the number of paths (or the
selectivity of the channel) on the performance of PA.

This work is an extension of \cite{ieeeCOM.meshkati05} in
the case of frequency-selective fading, in the framework of multi-user
systems. We do not consider multiple carriers, as in
\cite{ieeeJSAC.meshkati06}, and the results are very different to
those obtained in that work. The extension is not trivial and involves
advanced results on random matrices with non-equal variances due to
Girko \cite{book.girko90} whereas classical results rely on the work
of Silverstein \cite{jma.silverstein95}. A part of this work was previously 
published as a conference paper \cite{rawnet.bonneau07}. 

Moreover, in addition to the linear filters studied in
\cite{ieeeCOM.meshkati05}, we study the enhancements provided by the
optimum and successive interference cancellation filters.

\section{Random Matrix Theory Notations and Concepts}
\label{sec:notations}

The following definitions and theorem can be found in
\cite{book.tulino04} and will be used in the following sections.
In this section, $N$ and $K$ are positive integers. 

\begin{definition} 
Let $\nu$ be a probability measure. The \emph{Stieltjes transform} $m^{\nu}$ associated to $\nu$ is given by
\begin{equation*}
m^{\nu}(z) = \int \frac{1}{t-z}\nu(dt).
\end{equation*}
\end{definition}

\begin{definition}
\label{def:ed}
Let $\vv=[\Vv_1,\dotsc,\Vv_N]$ be a vector. Its \emph{empirical
distribution} is the function $F_N^{\vv}:\RR\to[0,1]$ defined by:
\begin{equation*}
F_N^{\vv}(x)=\frac{1}{N}\#\{\Vv_i\leq x\, |\, i=1\dotsc N\}.
\end{equation*}
\end{definition}

In other words, $F_N^{\vv}(x)$ is the fraction of elements of $\vv$
that are inferior or equal to $x$.  In particular, if $\vv$ is the
vector of eigenvalues of a matrix $\VV$, $F_N^{\vv}$ is called the
\emph{empirical eigenvalue distribution} of $\VV$.

\begin{definition}
\label{def:ergodic}
Let $\VV$ be a $N\times K$ random matrix with independent columns and
entries $\Vv_{ij}$. Denote by $\loor{{\cdot}}$ the closest smaller
integer. $\VV$ is said to \emph{behave ergodically} if, as
$N,K\rightarrow\infty$ with $K/N\rightarrow \alpha$, for $x\in[0,1]$,
the empirical distribution of
\begin{equation*}
\left[\abs{\Vv_{\loor{xN},1}}^2,\dotsc,\abs{\Vv_{\loor{xN},K}}^2\right]
\end{equation*}
converges almost surely to a non-random limit distribution denoted $F_x^{\VV}({\cdot})$ and, for $y\in[0,\alpha]$, the empirical distribution of
\begin{equation*}
\left[\abs{\Vv_{1,\loor{yN}}}^2,\dotsc,\abs{\Vv_{N,\loor{yN}}}^2\right]
\end{equation*}
converges almost surely to a non-random limit distribution denoted $F_y^{\VV}({\cdot})$.
\end{definition}

\begin{definition}
\label{def:channelprofile}
Let $\VV$ be a $N\times K$ random matrix that behaves ergodically as
in Def. \ref{def:ergodic}, such as $F_x^{\VV}({\cdot})$ and
$F_y^{\VV}({\cdot})$ have all their moments bounded. The
\emph{two-dimensional channel profile} of $\VV$ is the function
$\rho^{\VV}(x,y):[0,1]\times[0,\alpha]\to\RR$ such that, if the random
variable $X$ is uniformly distributed in $[0,1]$, then the
distribution of $\rho^{\VV}(X,y)$ equals $F_y^{\VV}({\cdot})$ and, if
the random variable $Y$ is uniformly distributed in $[0,\alpha]$, then
the distribution of $\rho^{\VV}(x,Y)$ equals $F_x^{\VV}({\cdot})$.
\end{definition}

\begin{theorem}
\label{th:tulino}
Let $\YY=\VV\odot\WW$ be a $N\times K$ matrix, where ${\odot}$ is the
Hadamard (element-wise) product and $\VV$ and $\WW$ are independent
$N\times K$ random matrices. Assume that $\VV$ behaves ergodically
with channel profile $\rho^{\VV}(x,y)$ as in
Def. \ref{def:channelprofile} and that $\WW$ has i.i.d.~entries with
zero mean and variance $\frac{1}{N}$. Then, as $N,K\rightarrow\infty$
with $K/N\rightarrow\alpha$, the empirical eigenvalue distribution of
$\YY\YY^H$ converges almost surely to a non-random limit distribution
function whose Stieltjes transform is given by:
\begin{align*}
m^{\YY\YY^H}(z)&=\lim_{N\to\infty}\frac{1}{N}\Trace\left(\left(\YY\YY^H-z\Id\right)^{-1}\right)\\&=\int_0^1 u(x,z)dx
\end{align*}
and $u(x,z)$ satisfies the fixed point equation:
\begin{equation}
u(x,z)=\frac{1}{\int_0^{\alpha}\frac{\rho^{\VV}(x,y)dy}{1+\int_0^1 \rho^{\VV}(x',y)u(x',z)dx'}-z}.
\label{eq:mrztheorem}
\end{equation}
The solution to equation \eqref{eq:mrztheorem} exists and is unique in
the class of functions $u(x,z)\geq 0$, analytic for
$\textrm{Im}(z)>0$, and continuous on $x\in[0,1]$.
\end{theorem}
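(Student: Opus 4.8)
The plan is to pass through the resolvent $\mathbf{Q}(z)=\left(\YY\YY^H-z\Id\right)^{-1}$, whose normalized trace $\frac{1}{N}\Trace\mathbf{Q}(z)$ is precisely the Stieltjes transform to be characterized, and to show that each diagonal entry $[\mathbf{Q}(z)]_{ii}$ converges, with $x=i/N$, to the solution $u(x,z)$ of \eqref{eq:mrztheorem}. Write $\sigma_{ik}^2=\Expectation\left[\abs{Y_{ik}}^2\right]=\frac{1}{N}\rho^{\VV}(i/N,k/N)$ for the variance profile of $\YY=\VV\odot\WW$. First I would prove that for fixed $z$ with $\Imag{z}>0$ the quantities $\frac{1}{N}\Trace\mathbf{Q}(z)$ and $[\mathbf{Q}(z)]_{ii}$ concentrate almost surely around their expectations: the resolvent entries are bounded and Lipschitz in the independent entries of $\YY$, so a martingale-difference (Azuma) argument controls the fluctuations at order $N^{-1/2}$, the bounded-moment hypothesis on $F_x^{\VV}$ and $F_y^{\VV}$ from Def.~\ref{def:channelprofile} being exactly what justifies the accompanying truncation of the entries.

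The core step derives the fixed point relation by a leave-one-out argument. Isolating the $i$-th row $\mathbf{r}_i$ of $\YY$, the Schur complement formula combined with the push-through identity $\YY_{(i)}^H\left(\YY_{(i)}\YY_{(i)}^H-z\Id\right)^{-1}\YY_{(i)}=\Id+z\tilde{\mathbf{Q}}_{(i)}$ --- where $\YY_{(i)}$ is $\YY$ with its $i$-th row deleted and $\tilde{\mathbf{Q}}_{(i)}=\left(\YY_{(i)}^H\YY_{(i)}-z\Id\right)^{-1}$ is the $K\times K$ companion resolvent --- yields the exact identity $[\mathbf{Q}(z)]_{ii}=\frac{-1}{z\left(1+\mathbf{r}_i\tilde{\mathbf{Q}}_{(i)}\mathbf{r}_i^H\right)}$. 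Since $\mathbf{r}_i$ is independent of $\tilde{\mathbf{Q}}_{(i)}$ and has independent entries of variance $\sigma_{ik}^2$ (conditionally on $\VV$), the trace lemma on quadratic forms gives $\mathbf{r}_i\tilde{\mathbf{Q}}_{(i)}\mathbf{r}_i^H=\frac{1}{N}\sum_k\rho^{\VV}(i/N,k/N)[\tilde{\mathbf{Q}}(z)]_{kk}+o(1)$, after replacing $\tilde{\mathbf{Q}}_{(i)}$ by $\tilde{\mathbf{Q}}(z)$ through a rank-one (Sherman--Morrison) perturbation bound. Running the identical manipulation on the columns of $\YY$ produces the companion relation $[\tilde{\mathbf{Q}}(z)]_{kk}=\frac{-1}{z\left(1+\frac{1}{N}\sum_i\rho^{\VV}(i/N,k/N)[\mathbf{Q}(z)]_{ii}\right)}+o(1)$.

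Writing $t_i$ for the deterministic limit of $[\mathbf{Q}(z)]_{ii}$ and $\tilde s_k=\frac{1}{N}\sum_i\rho^{\VV}(i/N,k/N)t_i$, substituting the companion relation into the first identity collapses the coupled pair into the single finite system $t_i=\left(-z+\frac{1}{N}\sum_k\frac{\rho^{\VV}(i/N,k/N)}{1+\tilde s_k}\right)^{-1}$. Invoking the ergodic convergence of Def.~\ref{def:channelprofile}, the Riemann sums $\frac{1}{N}\sum_k$ and $\frac{1}{N}\sum_i$ converge to $\int_0^\alpha dy$ and $\int_0^1 dx'$ with $\rho^{\VV}(i/N,k/N)\to\rho^{\VV}(x,y)$, so the finite system becomes exactly \eqref{eq:mrztheorem} with $u(x,z)=\lim_N t_{\loor{xN}}$ and $1+\tilde s_k\to 1+\int_0^1\rho^{\VV}(x',y)u(x',z)dx'$; averaging the diagonal then gives $\frac{1}{N}\Trace\mathbf{Q}(z)\to\int_0^1 u(x,z)dx$.

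For existence and uniqueness in the stated class I would argue that, for fixed $z$ with $\Imag{z}>0$, the right-hand side of \eqref{eq:mrztheorem} defines a map on the set of functions that are, pointwise in $x$, Stieltjes transforms of positive measures (Nevanlinna functions sending the upper half-plane into itself), and that this map preserves the class while being a strict contraction for a suitable metric when $\abs{\Imag{z}}$ is large; uniqueness there extends to all $z$ with $\Imag{z}>0$ by analytic continuation, and the positivity and continuity in $x$ are inherited from the monotonicity and equicontinuity of the maps. The main obstacle is this analytic step together with the interchange of the two limits --- the $N\to\infty$ random-matrix limit and the discretization-to-integral limit --- which forces the convergence in Def.~\ref{def:channelprofile} to be uniform enough to dominate the $o(1)$ errors simultaneously over all rows $i$ and columns $k$.
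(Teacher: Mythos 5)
The paper does not actually prove this theorem: it is quoted as background from the Tulino--Verd\'u monograph \cite{book.tulino04}, which in turn rests on Girko's results for random matrices with a variance profile, so there is no in-paper argument to compare yours against. Your sketch follows what is essentially the canonical route to such deterministic-equivalent results --- Schur complement on the resolvent, the push-through identity, the trace lemma for quadratic forms with independent entries, rank-one perturbation bounds, martingale concentration, and a contraction argument for the fixed-point equation --- and the algebra you display (the identity $[\mathbf{Q}(z)]_{ii}=-1/\bigl(z\bigl(1+\mathbf{r}_i\tilde{\mathbf{Q}}_{(i)}\mathbf{r}_i^H\bigr)\bigr)$, the companion relation on the columns, and the collapse of the coupled pair into the finite system for $t_i$) is correct and does reproduce \eqref{eq:mrztheorem} in the limit, with $\frac{1}{N}\sum_k\to\int_0^\alpha dy$ and $\frac{1}{N}\sum_i\to\int_0^1 dx'$.

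One point deserves to be called out as more than a routine technicality. The trace lemma gives $\mathbf{r}_i\tilde{\mathbf{Q}}_{(i)}\mathbf{r}_i^H=\frac{1}{N}\sum_k\abs{\Vv_{ik}}^2[\tilde{\mathbf{Q}}_{(i)}]_{kk}+o(1)$ with the \emph{actual} squared entries $\abs{\Vv_{ik}}^2$, not with $\rho^{\VV}(i/N,k/N)$. Definitions \ref{def:ergodic} and \ref{def:channelprofile} only assert convergence of the empirical distributions of each row and each column of squared moduli; they say nothing about the joint arrangement of the $\abs{\Vv_{ik}}^2$ across indices, whereas the sum above weights them by resolvent entries $[\tilde{\mathbf{Q}}_{(i)}]_{kk}$ that are themselves functions of that arrangement. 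Replacing $\abs{\Vv_{ik}}^2$ by $\rho^{\VV}(i/N,k/N)$ therefore requires an essentially pointwise description of the variance profile (e.g.\ $\sigma_{ik}^2=\frac{1}{N}\rho(i/N,k/N)$ for a fixed, sufficiently regular profile, as assumed in Girko's and in Hachem--Loubaton--Najim's formulations) rather than the purely distributional one stated here. You gesture at this at the end (``uniform enough''), but it is the place where a blind execution of your plan would get stuck with the hypotheses as literally written; the cited references resolve it by assuming the stronger form of the profile from the outset. The remaining steps --- concentration, and existence/uniqueness via a contraction for large $\abs{\Imag{z}}$ followed by analytic continuation --- are standard and your outline of them is sound.
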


\section{Model}
\label{sec:model}

We consider a single uplink multi-user system cell, i.e., inter-cell
interference free case. The spreading length is denoted $N$. The
number of users in the cell is $K$.  The load is $\alpha=K/N$. The
general case of wide-band CDMA is considered where the signal
transmitted by user $k$ has complex envelope
\begin{equation*}
x_k(t)=\sum_n s_{kn}v_k(t-nT).
\end{equation*}
$v_k(t)$ is a weighted sum of elementary modulation pulses which
satisfy the Nyquist criterion with respect to the chip interval $T_c$
($T=NT_c$):
\begin{equation*}
v_k(t)=\sum_{\ell=1}^{N}v_{\ell k}\psi(t-(\ell-1) T_c).
\end{equation*}
The signal is transmitted over a frequency selective channel with
impulse response $c_k(\tau)$. Under the assumption of slowly-varying
fading, the continuous time received signal $y(t)$ at the base station
has the form:
\begin{equation*}
y(t)=\sum_{n}\sum_{k=1}^K s_{kn}\int c_k(\tau)v_k(t-nT-\tau)d\tau+n(t)
\end{equation*}
where $n(t)$ is zero-mean complex white Gaussian noise with variance
$\sigma^2$. The signal (after pulse matched filtering by $\psi^*(-t)$)
is sampled at the chip rate to get a discrete-time signal that has the
form:
\begin{equation}
\yy = \sum_{k=1}^K\HC_k\vv_k\powcoef{k}\signal_k + \NOISE
\label{eq:intro:frequencyselective}
\end{equation}
where $\HC_k$ are $N\times N$ Toeplitz matrices representing the
frequency selective fading for the $k$-th user, $\vv_k$ is a
$N\times1$ vector representing the spreading code of the $k$-th user,
and $\NOISE$ is an $N \times 1$ Additive White Gaussian Noise (AWGN)
vector with covariance matrix $\sigma^2\Id_N$.

We consider the case of a multipath channel. Under the assumption that
the number of paths from user $k$ to the base station is given by
$L_{k}$, the model of the channel is given by
\begin{equation}
c_{k}(\tau)=\sum_{\ell=0}^{L_{k}-1}\aaa{k}{\ell}\psi(\tau-\ttau{k}{\ell}).
\label{eq:channelmodel}
\end{equation}
where we assume that the channel is invariant during the time
considered. In order to compare channels at the same signal to
noise ratio, we constrain  the distribution of the i.i.d.~fading
coefficients $\aaa{k}{\ell}$ such as:
\begin{equation}
\Expect{}{\aaa{k}{\ell}}=0\ \text{and}\
\Expect{}{\abs{\aaa{k}{\ell}}^2}=\frac{\rrho}{L_{k}}.
\label{eq:firstsimplifyinghypothesis}
\end{equation}

Usually,  fading coefficients $\aaa{k}{\ell}$ are supposed to be
independent with decreasing variance as the delay increases.
In all cases, $\rrho$ is the average power of the channel, 
such as $\Expect{}{\abs{c_k(\tau)}^2}=\sum_{\ell=0}^{L_{k}-1}\Expect{}{\abs{\aaa{k}{\ell}}^2}=\rrho$,
for all channels considered.
For each user $k$, let  $\dd{k}{i}$ be the  Discrete
Fourier Transform of the fading process $c_{k}(\tau)$. The
frequency response of the channel at the receiver is given by:
\begin{equation}
\ddd_{k}(f)=\sum_{\ell=0}^{L_{k}-1}\aaa{k}{\ell} e^{-j2\pi f\ttau{k}{\ell}}\abs{\Psi(f)}^2. 
\label{eq:expressionofh}
\end{equation}
where we assume that the transmit filter $\Psi(f)$ and the receive
filter $\Psi^*(-f)$ are such that, given the bandwidth $W$,
\begin{equation}
\Psi(f)=\begin{cases} 1& \text{if}\ -\frac{W}{2}\leq f\leq\frac{W}{2}\\
0& \text{otherwise.} \end{cases}
\label{eq:expressionofpsi}
\end{equation}

Sampling at the various frequencies $f_1=-\frac{W}{2}$,
$f_2=-\frac{W}{2}+\frac{1}{N}W$, \dots,
$f_{N}=-\frac{W}{2}+\frac{N-1}{N}W$, we obtain the coefficients
$\dd{k}{i}$, $1\leq i\leq N$, as
\begin{equation}
\dd{k}{i}=\ddd_{k}(f_i)=\sum_{\ell=0}^{L_{k}-1}\aaa{k}{\ell}e^{-j2\pi\frac{i}{N}W\ttau{k}{\ell}}e^{j\pi W\ttau{k}{\ell}}.
\label{eq:dftchannelmodel}
\end{equation}
Note that $\Expect{}{\abs{\dd{k}{i}}^2}=\rrho$.

Since the users are supposed to be synchronized with the base station
and for sake of simplicity, we will consider in all the following that
users add a cyclic prefix of length equal to the channel impulse
response length to their code sequence.\footnote{Note that in the
asymptotic case (when $N \rightarrow \infty$), the result holds
without the need of a cyclic prefix as long as the channel is
absolutely summable \cite{book.Gray77}.} This case is similar to
uplink MC-CDMA \cite{pimrc.fazel93,isssta.lindner96}. As a
consequence, matrices $\{\HC_k\}$ are circulant
\cite{commag.bingham90} and can all be diagonalized in the Fourier basis
$\FF$ \cite{book.Gray77}. Model \eqref{eq:intro:frequencyselective} simplifies
therefore to:
\begin{equation}
\yy = \sum_{k=1}^K \FF\DD_k\FH\vv_k\powcoef{k}\signal_k + \NOISE
\label{eq:intro:diag}
\end{equation}
where $\DD_k$ is a diagonal matrix with diagonal elements
$\{\dd{k}{i}\}_{i=1\dotsc N}$.  For each user $k$, the coefficients
$\dd{k}{i}$ are the discrete Fourier transform of the channel impulse
response.  

We make the hypothesis that the users employ Gaussian i.i.d.~codes
with zero mean and variance $1/N$ \cite{ieeeIT.tse00}. This hypothesis
enables us to state simply our results, however almost all of the
results are valid for any distribution of the codes as long as it has
mean zero and variance $1/N$ \cite{ieeeIT.tulino05}. In particular,
since every unitary tranformation of a Gaussian i.i.d.~vector is a
Gaussian i.i.d.~vector (so that $\ww_i=\FH\vv_i$ has the same
distribution as $\vv_i$ for any $i$), we multiply $\yy$ in
\eqref{eq:intro:diag} with $\FH$ and obtain without any change in the
statistics:
\begin{align}
\yy &= \sum_{k=1}^K \DD_k\ww_k\powcoef{k}\signal_k + \NOISE \nonumber\\
&= \bigl(\HH\PP\odot\WW\bigr)\SIGNAL + \NOISE
\label{eq:intro:compact}
\end{align}
where $\odot$ is the Hadamard (element-wise) product.

In \eqref{eq:intro:compact}, $\HH$ is the frequency selective fading matrix, of size $N\times K$:
\begin{equation*}
\HH = \begin{bmatrix}\dd{1}{1}&\dd{2}{1}&\ldots&\dd{K}{1}\\
\vdots&\vdots&&\vdots\\
\dd{1}{N}&\dd{2}{N}&\ldots&\dd{K}{N}\end{bmatrix}.
\end{equation*}

$\PP$ is the root square of the diagonal power control  matrix, of size $K\times K$.

$\WW$ is an $N\times K$ random  spreading matrix:
\begin{equation*}
\WW = \bigl[\ww_1|\ww_2|\dotsb|\ww_K\bigr]\ \text{where}\ \ww_k = \begin{bmatrix}\uu{k}{1}\\ \vdots\\ \uu{k}{N}\end{bmatrix}.
\end{equation*}

Note that asymptotically (as $N \rightarrow \infty$), for a given
multipath channel of length $L$, model (\ref{eq:intro:compact}) is
also valid for the case of uplink DS-CDMA since all Toeplitz matrices
can be asymptotically diagonalized in a Fourier Basis
\cite{book.Gray77,ieeeIT.hachem02}.

In the following, we will assume that the frequency selective fading
matrix $\HH$ behaves ergodically, as in Def. \ref{def:ergodic}. The
two-dimensional channel profile of $\HH\PP$ is denoted
$\rho(f,x)=P(x)\abs{h(f,x)}^2,\ f\in[0,1],\ x\in[0,\alpha]$. $f$ is
the frequency index and $x$ is the user index. This enables us to use
Th. \ref{th:tulino} in order to obtain expressions for the SINR.

It is also assumed that the power of all users is upper bounded by $\Pmax$ 
and the square norm of the fading, on all paths, for all users, is upper bounded by $\hmax$.

\section{Asymptotic SINR Expressions}
\label{sec:SINR}

Let $\HHu{k}$ be the $k$-th column of $\HH$, and $\Hu{k}$ be $\HH$
with $\HHu{k}$ removed. Similarly, let $\WWu{k}$ be the $k$-th column
of $\WW$, and $\Wu{k}$ be $\WW$ with $\WWu{k}$ removed.  Let $\Pu{k}$
be $\PP$ with the $k$-th column and line removed. Finally, let
$\Gu{k}=\Hu{k}\Pu{k}\odot\Wu{k}$.

\subsection{Matched Filter}
\label{sec:Matchedfilter}

Supposing perfect CSI at the receiver, the matched filter for the
$k$-th user is given by
$\g_k=\powcoef{k}\left(\HHu{k}\odot\WWu{k}\right)$.  This leads to the
following expression for the SINR of user $k$
\begin{equation*}
\SINR_k = \frac{\abs{\g_k^H\g_k}^2}{\sigma^2\g_k^H\g_k+\g_k^H\left(\Gu{k}\GuH{k}\right)\g_k}.
\end{equation*}

\begin{proposition}
\label{prop:mf}
\cite{ieeeIT.tulino05}
As $N,K\rightarrow\infty$ with $K/N\rightarrow\alpha$, the SINR of user $k$ at the output of the matched filter is given by
\begin{equation*}
\SINR_k=\betamf\left(\frac{k}{N}\right)
\end{equation*}
where $\betamf: [0,\alpha]\to\RR$ is given by
\begin{multline}
\betamf(x)=P(x)\cdot\\{\frac{(H(x))^2}{\sigma^2 H(x)+\int_0^{\alpha}\int_0^1 P(y)\abs{h(f,y)}^2\abs{h(f,x)}^2dfdy}}
\label{eq:implicit:betamf}
\end{multline}
and $H(x)=\int_0^1\abs{h(f,x)}^2df$. 
\end{proposition}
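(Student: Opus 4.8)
The plan is to evaluate the three quadratic forms appearing in $\SINR_k$ separately, exploiting the fact that conditionally on the fading matrix $\HH$ the only remaining randomness lies in the i.i.d.\ spreading codes $\WW$, and then to pass to the limit using two ingredients: concentration of quadratic forms in the codes (a ``trace lemma'') and the ergodic convergence of the empirical averages of $\abs{\dd{k}{i}}^2$ guaranteed by Def.~\ref{def:ergodic}. Throughout I write $x=k/N$ and identify each competing user index $j$ with $y_j=j/N\in[0,\alpha]$, so that Riemann sums $\frac1N\sum_j(\cdots)$ converge to integrals $\int_0^\alpha(\cdots)\,dy$. In contrast to the MMSE filter, the matched filter SINR is an explicit ratio of quadratic forms, so no appeal to the fixed-point equation of Th.~\ref{th:tulino} is needed; elementary moment computations suffice.

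First I would treat the numerator and the noise part of the denominator, both governed by $\g_k^H\g_k=P_k\sum_{i=1}^N\abs{\dd{k}{i}}^2\abs{\uu{k}{i}}^2$. Conditionally on $\HH$ this is a sum of independent terms with mean $\frac{P_k}{N}\sum_i\abs{\dd{k}{i}}^2$; using $\mathrm{Var}(\abs{\uu{k}{i}}^2)=O(1/N^2)$ and $\sum_i\abs{\dd{k}{i}}^4=O(N)$, both consequences of the $\hmax$ bound, its fourth central moment is $O(1/N^2)$, so Markov's inequality together with Borel--Cantelli gives $\g_k^H\g_k\to P(x)H(x)$ almost surely, where $H(x)=\int_0^1\abs{h(f,x)}^2\,df$ is the ergodic limit of $\frac1N\sum_i\abs{\dd{k}{i}}^2$. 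Consequently the numerator tends to $P(x)^2H(x)^2$ and the noise term to $\sigma^2P(x)H(x)$.

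The crux is the interference term. Expanding $\g_k^H\Gu{k}\GuH{k}\g_k=\sum_{j\neq k}\abs{\g_k^H\g_j}^2$ and conditioning on $\HH$, the matrix $\Gu{k}\GuH{k}$ is independent of the code defining $\g_k$, so a quadratic-form concentration lemma for the entries of $\g_k$ (independent, mean zero, with variances $\frac{P_k}{N}\abs{\dd{k}{i}}^2$) replaces the form by its diagonal part $\sum_{i=1}^N\frac{P_k}{N}\abs{\dd{k}{i}}^2\bigl[\Gu{k}\GuH{k}\bigr]_{ii}=\frac{P_k}{N}\sum_i\abs{\dd{k}{i}}^2\sum_{j\neq k}P_j\abs{\dd{j}{i}}^2\abs{\uu{j}{i}}^2$. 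A law of large numbers in the remaining codes $\uu{j}{i}$ then replaces each $\abs{\uu{j}{i}}^2$ by its mean $1/N$, leaving $\frac{P_k}{N}\sum_i\abs{\dd{k}{i}}^2\bigl(\frac1N\sum_{j\neq k}P_j\abs{\dd{j}{i}}^2\bigr)$. Finally ergodicity turns the inner sum into $\int_0^\alpha P(y)\abs{h(f,y)}^2\,dy$ and the outer average into the integral over $f$, so the interference term tends to $P(x)\int_0^\alpha\int_0^1 P(y)\abs{h(f,y)}^2\abs{h(f,x)}^2\,df\,dy$. Dividing the numerator by the denominator and cancelling one factor $P(x)$ yields exactly \eqref{eq:implicit:betamf}.

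The main obstacle is upgrading these concentration statements from convergence in expectation to almost sure convergence, uniformly across the growing family of quadratic forms. This is exactly where the standing hypotheses $P_k\le\Pmax$ and $\abs{\dd{k}{i}}^2\le\hmax$ are needed: they bound the variances of all the relevant sums by $O(1/N)$ and supply the $O(1/N^2)$ fourth-moment control that makes Borel--Cantelli applicable. One must additionally check that the off-diagonal contributions $\sum_{i\neq i'}$ discarded in the trace lemma are negligible, and that replacing $\HH,\WW,\PP$ by the punctured matrices $\Hu{k},\Wu{k},\Pu{k}$ (removing a single user) leaves every limit unchanged; both facts follow from the same boundedness assumptions and the scaling $K/N\to\alpha$.
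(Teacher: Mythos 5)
The paper itself offers no proof of Prop.~\ref{prop:mf} --- it is simply quoted from \cite{ieeeIT.tulino05} --- and your argument is a correct reconstruction of the standard large-system analysis behind it: quadratic-form concentration in the spreading codes reduces the numerator, noise, and interference terms of $\SINR_k$ to the deterministic Riemann sums of \eqref{eq:finite:betamf}, and the ergodicity hypothesis of Def.~\ref{def:ergodic} turns those sums into the integrals of \eqref{eq:implicit:betamf}. Two small points are worth tightening: the trace lemma (Lemma~\ref{lemmabai}) is stated for vectors with i.i.d.\ entries, so you should first absorb the deterministic weights into the matrix by writing $\g_k^H\Amat\g_k=P_k\,\WWu{k}^H\bigl(\DD_k^H\Amat\DD_k\bigr)\WWu{k}$; and its hypothesis $\sup_N\norm{\Gu{k}\GuH{k}}<\infty$ deserves a word of justification (or can be bypassed altogether for the matched filter by the direct second- and fourth-moment computation you sketch, since the interference term is an explicit polynomial in the code entries).
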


Denoting $\SINR_k=\betamf_k$, Prop. \ref{prop:mf} enables us to
extract an approximation of the value of the SINR of user $k$ in the
finite size case
\begin{multline}
\betamf_k = \\ \frac{P_k\left(\frac{1}{N}\sum_{n=1}^N\abs{\dd{k}{n}}^2\right)^2}{\frac{\sigma^2}{N}\sum_{n=1}^N\abs{\dd{k}{n}}^2+\frac{1}{N^2}\sum_{j\neq k}\sum_{n=1}^N P_j\abs{\dd{j}{n}}^2\abs{\dd{k}{n}}^2}.
\label{eq:finite:betamf}
\end{multline}
We observe that
$P_k\frac{\partial\betamf_k}{\partial P_k}=\betamf_k$.

\subsection{MMSE Filter}
\label{sec:MMSEfilter}

Supposing perfect CSI at the receiver, the MMSE filter for the $k$-th
user is given by $\g_k^{\textrm{MMSE}}=\R^{-1}\g_k$, where
$\R=\left(\left(\HH\PP\odot\WW\right)\left(\HH\PP\odot\WW\right)^H+\sigma^2\Id_N\right)$.
This leads to the following expression for the SINR of user $k$ \cite{ieeeIT.tse99}
\begin{equation}
\SINR_k = \g_k^H\left(\Gu{k}\GuH{k}+\sigma^2\Id_N\right)^{-1}\g_k.
\label{eq:exactsinrmmse}
\end{equation}

\begin{proposition}
\label{prop:mmse}
\cite{ieeeIT.tulino05}
As $N,K\rightarrow\infty$ with $K/N\rightarrow\alpha$, the SINR of user $k$ at the output of the MMSE receiver is given by:
\begin{equation*}
\SINR_k=\betammse\left(\frac{k}{N}\right)
\end{equation*}
where $\betammse: [0,\alpha]\to\RR$ is a function defined by the implicit equation
\begin{equation}
\betammse(x)=P(x)\int_0^1\frac{\abs{h(f,x)}^2df}{\sigma^2+\int_0^{\alpha}\frac{P(y)\abs{h(f,y)}^2dy}{1+\betammse(y)}}.
\label{eq:implicit:beta}
\end{equation}
\end{proposition}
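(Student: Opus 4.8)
The plan is to start from the exact finite-size expression \eqref{eq:exactsinrmmse} and to recognize it as a quadratic form in the spreading code of the tagged user, to which a trace (quadratic-form concentration) lemma can be applied. First I would rewrite the signal direction using the diagonal matrix $\DD_k=\mathrm{diag}(\dd{k}{1},\dotsc,\dd{k}{N})$ carrying the channel of user $k$, so that $\g_k=\powcoef{k}\,\DD_k\WWu{k}$ and hence
\begin{equation*}
\SINR_k = P_k\,\WWuH{k}\,\DD_k^H\bigl(\Gu{k}\GuH{k}+\sigma^2\Id\bigr)^{-1}\DD_k\,\WWu{k}.
\end{equation*}
The crucial structural observation is that the code $\WWu{k}$ (the $k$-th column of $\WW$) is independent of both $\DD_k$ and $\Gu{k}=\Hu{k}\Pu{k}\odot\Wu{k}$, since the latter only involves the other users' codes $\Wu{k}$. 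Thus $\WWu{k}$ is an i.i.d.\ vector of variance $1/N$ independent of the Hermitian matrix $\DD_k^H(\Gu{k}\GuH{k}+\sigma^2\Id)^{-1}\DD_k$, and the trace lemma gives
\begin{equation*}
\SINR_k - \frac{P_k}{N}\Trace\Bigl(\bigl(\Gu{k}\GuH{k}+\sigma^2\Id\bigr)^{-1}\DD_k\DD_k^H\Bigr)\xrightarrow[N\to\infty]{}0\as,
\end{equation*}
so that $\SINR_k$ is asymptotically $\frac{P_k}{N}\sum_{n=1}^N\abs{\dd{k}{n}}^2\,[(\Gu{k}\GuH{k}+\sigma^2\Id)^{-1}]_{nn}$. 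The boundedness assumptions on the powers ($\Pmax$) and the fading ($\hmax$) supply the moment control needed for this step.

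Next I would handle the diagonal resolvent entries. Since $\Gu{k}\GuH{k}=\YY\YY^H$ for the matrix $\YY=\Hu{k}\Pu{k}\odot\Wu{k}$ obtained by deleting one of the $K$ columns, removing the tagged user is a rank-one (hence asymptotically negligible) perturbation of $(\HH\PP\odot\WW)(\HH\PP\odot\WW)^H$. I would then invoke Th.~\ref{th:tulino}, evaluated at $z=-\sigma^2$ (which lies off the nonnegative support of $\YY\YY^H$, so the result extends from $\mathrm{Im}(z)>0$ to the negative real axis by analytic continuation), to assert that the individual diagonal entry at row $n$ converges to the deterministic equivalent $u(n/N,-\sigma^2)$. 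Because $\Gu{k}$ does not involve user $k$'s channel, $[(\Gu{k}\GuH{k}+\sigma^2\Id)^{-1}]_{nn}$ is asymptotically independent of $\abs{\dd{k}{n}}^2$, and combining this with the ergodicity of $\HH$ (Def.~\ref{def:ergodic}--\ref{def:channelprofile}) the weighted sum converges to an integral against the channel profile:
\begin{equation*}
\SINR_k\xrightarrow[N\to\infty]{}P(x)\int_0^1\abs{h(f,x)}^2\,u(f,-\sigma^2)\,df=:\betammse(x),\qquad x=\tfrac{k}{N}.
\end{equation*}

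Finally I would close the equation by substituting the fixed-point characterization of $u$. With $\rho(f,y)=P(y)\abs{h(f,y)}^2$ and $z=-\sigma^2$, equation \eqref{eq:mrztheorem} reads
\begin{equation*}
u(f,-\sigma^2)=\frac{1}{\sigma^2+\displaystyle\int_0^{\alpha}\frac{P(y)\abs{h(f,y)}^2\,dy}{1+\int_0^1 P(y)\abs{h(g,y)}^2 u(g,-\sigma^2)\,dg}}.
\end{equation*}
The inner integral $\int_0^1 P(y)\abs{h(g,y)}^2 u(g,-\sigma^2)\,dg$ is, by the very definition of $\betammse$ just obtained, equal to $\betammse(y)$; substituting this identity into the displayed expression for $u$ and then back into $\betammse(x)=P(x)\int_0^1\abs{h(f,x)}^2 u(f,-\sigma^2)\,df$ yields precisely the implicit relation \eqref{eq:implicit:beta}. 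I expect the main obstacle to be the passage to the deterministic equivalent of the \emph{individual} diagonal resolvent entries together with the decoupling of $\abs{\dd{k}{n}}^2$ from $[(\Gu{k}\GuH{k}+\sigma^2\Id)^{-1}]_{nn}$: Th.~\ref{th:tulino} as stated only controls the normalized trace $\frac1N\Trace(\cdot)$, whereas here I need entrywise convergence uniform enough in the row index to integrate against the non-random channel profile. This is exactly the refinement furnished by the Girko-type machinery underlying Th.~\ref{th:tulino}, and justifying its applicability at the real point $z=-\sigma^2$ (rather than in the open upper half-plane) is the delicate point.
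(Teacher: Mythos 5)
The paper does not actually prove Proposition~\ref{prop:mmse}: it is imported verbatim from \cite{ieeeIT.tulino05}, so there is no internal proof to compare against. Judged on its own terms, your reconstruction is the standard and correct route to this result, and it is consistent with how the paper uses the proposition elsewhere. The rewriting $\g_k=\powcoef{k}\DD_k\WWu{k}$, the independence of $\WWu{k}$ from $\DD_k^H(\Gu{k}\GuH{k}+\sigma^2\Id)^{-1}\DD_k$, and the reduction of \eqref{eq:exactsinrmmse} to $\frac{P_k}{N}\Trace\bigl((\Gu{k}\GuH{k}+\sigma^2\Id)^{-1}\DD_k\DD_k^H\bigr)$ via the quadratic-form concentration of Lemma~\ref{lemmabai} are exactly the ingredients the paper itself deploys in Appendix~\ref{sec:proof:large} for the MMSE filter, and your closing step --- identifying $\betammse(x)=P(x)\int_0^1\abs{h(f,x)}^2u(f,-\sigma^2)\,df$ and substituting into \eqref{eq:mrztheorem} to obtain \eqref{eq:implicit:beta} --- is precisely the algebra the paper runs in the reverse direction in Appendix~\ref{sec:proof:opt} (where it derives \eqref{eq:link} from \eqref{eq:implicit:beta}). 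The one point you correctly flag as delicate is genuinely the crux: Th.~\ref{th:tulino} as stated only controls the normalized trace of the resolvent, whereas your argument needs (i) convergence of the \emph{individual} diagonal entries $[(\Gu{k}\GuH{k}+\sigma^2\Id)^{-1}]_{nn}$ to $u(n/N,-\sigma^2)$, uniformly enough in $n$ to integrate against $\abs{h(f,x)}^2$, and (ii) validity of the fixed point at the real point $z=-\sigma^2$. Neither follows from the theorem as quoted; both are supplied by the Girko-type machinery underlying it and by the results of \cite{ieeeIT.tulino05,ieeeIT.tse00}, which is presumably why the authors cite rather than prove. So your proposal is a faithful sketch of the external proof, with the unproved-within-this-paper steps honestly isolated rather than papered over.
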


Denoting $\SINR_k=\betammse_k$, Prop. \ref{prop:mmse} enables us to
extract an approximation of the value of the SINR of user $k$ in the
finite size case
\begin{equation}
\betammse_k = P_k\frac{1}{N}\sum_{n=1}^N\abs{\dd{k}{n}}^2\frac{1}{\sigma^2+\frac{1}{N}\sum_{j\neq k} \frac{P_j\abs{\dd{j}{n}}^2}{1+\betammse_j}}.
\label{eq:finite:betammse}
\end{equation}

From \eqref{eq:exactsinrmmse}, we observe that $P_k\frac{\partial\betammse_k}{\partial P_k}=\betammse_k$.

From Prop. \ref{prop:mmse}, we have the capacity of user $k$
\begin{equation*}
\Gammmse_k = \frac{1}{N}\log_2\left(1+\betammse_k\right).
\end{equation*}
The global capacity of the system is
\begin{equation}
\Gammmse = \int_0^\alpha \log_2\left(1+\betammse(x)\right)dx.
\label{eq:gammmse}
\end{equation}

\subsection{Optimal Filter}
\label{sec:optimalfilter}

The term optimal filter designates a filter capable of decoding the
received signal at the bound given by Shannon's capacity. Hence it is
difficult to define an SINR associated to it. However, results of
random matrix theory can still be applied.  Let
$\YY=\left(\HH\PP\odot\WW\right)$. The definition of Shannon's
capacity per dimension for our system is
\begin{equation}
\Gamopt_{(N)}=\frac{1}{N}\log_2\det\left(\Id_N+\frac{1}{\sigma^2}\YY\YY^H\right).
\label{eq:shannoncapa}
\end{equation}
As $N,K\rightarrow\infty$ with $K/N\rightarrow\alpha$,
\begin{equation}
\Gamopt_{(N)}\rightarrow\int\log_2\left(1+\frac{1}{\sigma^2}t\right)\nu(dt)
\label{eq:sctendsto}
\end{equation}
where $\nu$ is the empirical eigenvalue distribution of $\YY\YY^H$, as
in Def. \ref{def:ed}.  If we differentiate the asymptotic value
$\Gamopt$ of \eqref{eq:sctendsto} with respect to $\sigma^2$, we
obtain
\begin{align}
\frac{\partial\Gamopt}{\partial\sigma^2}&=\log_2(e)\int\frac{-\frac{1}{\sigma^4}t}{1+\frac{1}{\sigma^2}t}\nu(dt)\nonumber\\
&=\log_2(e)\int\frac{\sigma^2\left(-\frac{1}{\sigma^4}t-\frac{1}{\sigma^2}+\frac{1}{\sigma^2}\right)}{\sigma^2\left(1+\frac{1}{\sigma^2}t\right)}\nu(dt)\nonumber\\
&=\log_2(e)\left(\int\frac{1}{t+\sigma^2}\nu(dt)-\frac{1}{\sigma^2}\int\nu(dt)\right)\nonumber\\
&=\log_2(e)\left(m^{\nu}(-\sigma^2)-\frac{1}{\sigma^2}\right)
\label{eq:scderived}
\end{align}
where $m^{\nu}(\cdot)$ is the Stieltjes transform of the empirical
eigenvalue distribution of $\YY\YY^H$. From Th. \ref{th:tulino},
$m^{\nu}(\cdot)$ is given by
\begin{equation*}
m^{\nu}(z)=\int_0^1 u(f,z)df
\end{equation*}
where $u(f,z)$ is given by \eqref{eq:mrztheorem} with
$\rho^{\HH\PP}(f,x)=\rho(f,x)=P(x)\abs{h(f,x)}^2$.  Given that if
$\sigma^2=+\infty$, $\Gamopt=0$, it is immediate to obtain $\Gamopt$
from \eqref{eq:scderived} as
\begin{equation}
\Gamopt = \log_2(e)\int_{\sigma^2}^{+\infty} m^{\nu}(-z)-\frac{1}{z}dz.
\label{eq:scint}
\end{equation}

\begin{proposition}
\label{prop:opt}
$\Gamopt$ and $\Gammmse$ are related through the following equality
\begin{align}
\Gamopt &= \Gammmse - \log_2(e)\int_0^\alpha\frac{\betammse(x)}{1+\betammse(x)}dx\nonumber\\
&\ +\int_0^1\log_2\left(1+\frac{1}{\sigma^2}\int_0^\alpha\frac{\rho(f,x)}{1+\betammse(x)}dx\right)df.
\label{eq:optmmseeq}
\end{align}
\end{proposition}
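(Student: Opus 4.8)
The plan is to verify \eqref{eq:optmmseeq} by treating both sides as functions of the noise variance $\sigma^2$, showing that they have the same derivative in $\sigma^2$ and that they agree in the limit $\sigma^2\to\infty$. The derivative of the left-hand side is already supplied by \eqref{eq:scderived}, namely $\partial\Gamopt/\partial\sigma^2=\log_2(e)\left(m^{\nu}(-\sigma^2)-1/\sigma^2\right)$, and the boundary value $\Gamopt=0$ at $\sigma^2=+\infty$ is noted just after \eqref{eq:scderived}. Hence it suffices to differentiate the right-hand side of \eqref{eq:optmmseeq} and reduce it to the same expression, then check the boundary.

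The first key step is to make explicit the link between the MMSE fixed point \eqref{eq:implicit:beta} and the Stieltjes-transform fixed point \eqref{eq:mrztheorem} evaluated at $z=-\sigma^2$, with $\rho(f,x)=P(x)\abs{h(f,x)}^2$. Writing $b(x):=\betammse(x)$ and setting $g(f):=\int_0^{\alpha}\rho(f,x)(1+b(x))^{-1}dx$, I claim that $u(f,-\sigma^2)=1/(\sigma^2+g(f))$ and, correspondingly, $b(x)=\int_0^1\rho(f,x)u(f,-\sigma^2)df$. Indeed, inserting the candidate $u(f,-\sigma^2)=1/(\sigma^2+g(f))$ into $\int_0^1\rho(f,x)u(f,-\sigma^2)df$ reproduces exactly the right-hand side of \eqref{eq:implicit:beta}; since \eqref{eq:mrztheorem} has a unique solution by Th. \ref{th:tulino} and \eqref{eq:implicit:beta} characterizes $\betammse$ uniquely, these identifications hold. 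In particular $m^{\nu}(-\sigma^2)=\int_0^1 u(f,-\sigma^2)df=\int_0^1 (\sigma^2+g(f))^{-1}df$.

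The second step is the differentiation of the right-hand side. Using $\Gammmse=\int_0^\alpha\log_2(1+b(x))dx$ and writing $b'(x):=\partial b(x)/\partial\sigma^2$, the three terms of \eqref{eq:optmmseeq} contribute, respectively, $\log_2(e)\int_0^\alpha b'(x)/(1+b(x))dx$, then $-\log_2(e)\int_0^\alpha b'(x)/(1+b(x))^2 dx$, and finally, after writing $\log_2(1+g/\sigma^2)=\log_2(\sigma^2+g)-\log_2\sigma^2$, a term equal to $\log_2(e)\left(m^{\nu}(-\sigma^2)-1/\sigma^2\right)+\log_2(e)\int_0^1 g'(f)/(\sigma^2+g(f))df$, where $g'(f)=\partial g(f)/\partial\sigma^2=-\int_0^\alpha\rho(f,x)b'(x)/(1+b(x))^2 dx$. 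The first of these two pieces is precisely $\partial\Gamopt/\partial\sigma^2$, so it remains to show that the leftover terms cancel, i.e. that $\int_0^\alpha b(x)b'(x)/(1+b(x))^2 dx+\int_0^1 g'(f)/(\sigma^2+g(f))df=0$. Substituting $g'(f)$, swapping the order of integration (Fubini), and using the fixed-point identity $\int_0^1\rho(f,x)/(\sigma^2+g(f))df=b(x)$ turns the second integral into $-\int_0^\alpha b(x)b'(x)/(1+b(x))^2 dx$, yielding the required cancellation and hence $\partial_{\sigma^2}(\textrm{RHS})=\partial\Gamopt/\partial\sigma^2$.

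For the boundary, as $\sigma^2\to\infty$ one has $\betammse(x)\to0$ for every $x$, so $\Gammmse\to0$, $\int_0^\alpha \betammse(x)/(1+\betammse(x))dx\to0$, and $g(f)/\sigma^2\to0$ so the last integral also vanishes; thus the right-hand side tends to $0=\Gamopt$ at $\sigma^2=+\infty$, and matching derivative and boundary value establishes \eqref{eq:optmmseeq}. I expect the main obstacle to be the bookkeeping in the second step: one must correctly propagate the implicit $\sigma^2$-dependence of $\betammse$ through $b'(x)$ and $g'(f)$ and recognize that the Fubini swap combined with the MMSE fixed point is exactly what produces the cancellation. This step also tacitly requires justifying differentiation under the integral sign and differentiability of $\betammse(\cdot)$ in $\sigma^2$, for which the uniform bounds $\Pmax$ on the powers and $\hmax$ on the fading are used.
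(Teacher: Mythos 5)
Your proposal is correct and follows essentially the same route as the paper's proof: both fix the constant of integration via the boundary condition at $\sigma^2\to\infty$ and then match derivatives in $\sigma^2$, with the key step being the identification $u(f,-\sigma^2)=\bigl(\sigma^2+\int_0^\alpha\rho(f,x)(1+\betammse(x))^{-1}dx\bigr)^{-1}$ obtained from the uniqueness of the fixed point in Th.~\ref{th:tulino}, which is exactly the paper's equation \eqref{eq:link}. The only difference is organizational: you differentiate the right-hand side term by term and cancel the leftover pieces via Fubini and the fixed-point identity, whereas the paper packages the same cancellation as a total derivative of $\int_0^\alpha\sigma^2\beta(x)\pi(x)dx$ with $\pi(x)=\frac{1}{\sigma^2(1+\beta(x))}$ --- algebraically these are the same computation.
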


\begin{proof}
See Appendix \ref{sec:proof:opt}.
\end{proof}

The additional term in the right-hand side of \eqref{eq:optmmseeq}
corresponds to the non-linear processing gain. It quantifies the gain
in terms of capacity that can be achieved between pure linear MMSE and
non-linear filtering.

Assuming perfect cancellation of decoded users, successive
interference cancellation with MMSE filter achieves the optimum
capacity \cite{ieeeIT.muller01}.  The following proposition ensues
from this fact.

\begin{proposition}
\label{prop:sinropt}
\cite{ieeeIT.tulino05}
As $N,K\rightarrow\infty$ with $K/N\rightarrow\alpha$, the optimal capacity is given by:
\begin{equation*}
\Gamopt=\int_0^\alpha\log_2\left(1+\betasic(x)\right)dx
\end{equation*}
where $\betasic: [0,\alpha]\to\RR$ is a function defined by the implicit equation
\begin{equation}
\betasic(x)=P(x)\int_0^1\frac{\abs{h(f,x)}^2df}{\sigma^2+\int_0^{x}\frac{P(y)\abs{h(f,y)}^2dy}{1+\betasic(y)}}.
\label{eq:implicit:betasic}
\end{equation}
\end{proposition}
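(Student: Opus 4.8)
The plan is to start from the operational fact quoted just above the statement --- that MMSE successive interference cancellation with perfect cancellation of the already--decoded users achieves the Shannon capacity $\Gamopt$ of \eqref{eq:shannoncapa} --- and to convert this into the closed--form fixed point \eqref{eq:implicit:betasic} by combining the chain rule of mutual information with the asymptotic MMSE SINR of Prop.~\ref{prop:mmse}.

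First I would fix a decoding order in which user $k$ is processed \emph{after} all users of higher index, so that at the moment user $k$ is decoded the contributions of users $k+1,\dots,K$ have been subtracted and only users $1,\dots,k$ remain. The chain rule of mutual information then yields the exact finite--$N$ identity
\begin{equation*}
\Gamopt_{(N)}=\frac{1}{N}\sum_{k=1}^K \log_2\bigl(1+\SINR_k\bigr),
\end{equation*}
where $\SINR_k$ is the SINR of the MMSE filter applied to user $k$ in the reduced system consisting of users $1,\dots,k$, the residual interference coming only from users of index $\le k$. This is precisely the MMSE--SIC optimality result of \cite{ieeeIT.muller01}, so no new argument is needed at this stage; it is this ordering that will be responsible for the truncation of the interference integral to $[0,x]$, in contrast with the integral over $[0,\alpha]$ appearing in the MMSE formula \eqref{eq:implicit:beta}.

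Next I would pass to the limit stage by stage. The reduced system $\{1,\dots,k\}$ is itself of the type covered by Prop.~\ref{prop:mmse}, but with load $k/N\to x$ rather than $\alpha$. Applying \eqref{eq:implicit:beta} to this sub--system and reading off the SINR of its boundary user $k$ (the one indexed by $x$) truncates the denominator integral to $[0,x]$, producing exactly the right--hand side of \eqref{eq:implicit:betasic}. Letting $N\to\infty$, the Riemann sum $\frac1N\sum_k\log_2(1+\SINR_k)$ should converge to $\int_0^\alpha\log_2(1+\betasic(x))\,dx$, giving the claimed expression for $\Gamopt$ (and one can cross--check consistency, at least for the full load, against Prop.~\ref{prop:opt}).

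The delicate point --- and to my mind the main obstacle --- is the identification of the SINRs that appear inside the denominator of \eqref{eq:implicit:betasic}. When Prop.~\ref{prop:mmse} is applied to the sub--system $\{1,\dots,k\}$, the quantities coupled in its fixed point are the MMSE SINRs \emph{of that sub--system}, which a priori depend on the truncation level $x$; writing them simply as $\betasic(y)$ for $y\le x$ asserts a nested--consistency property, namely that the effective interference generated by a user $y$ is governed by its own SIC value $\betasic(y)$ regardless of how many higher--index users are still present. Establishing this self--consistency, together with existence, uniqueness and continuity of the solution of \eqref{eq:implicit:betasic} and the uniform control needed to replace the stage--wise sum by the integral, is where I would concentrate the effort; the remaining manipulations are routine bookkeeping on top of Prop.~\ref{prop:mmse}.
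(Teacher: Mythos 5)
The paper offers no proof of this proposition: it is imported from \cite{ieeeIT.tulino05} and justified only by the one-line remark that MMSE filtering with successive interference cancellation achieves the optimum capacity \cite{ieeeIT.muller01}. Your skeleton --- chain rule of mutual information, information-losslessness of the MMSE filter, Prop.~\ref{prop:mmse} applied to the truncated subsystem of load $x$, then a Riemann sum --- is exactly the standard route behind that citation, so strategically you are aligned with what the paper gestures at.

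The genuine gap is the step you yourself flag and defer: the ``nested consistency'' $\beta_x(y)=\betasic(y)$. This is not a technicality to be supplied later; taken literally it fails, and it is precisely where the argument (and a literal single-function reading of \eqref{eq:implicit:betasic}) breaks. Test the equal-power, unfaded, frequency-flat case $P\equiv P$, $\abs{h}\equiv 1$: every user of the truncated system $[0,x]$ then has the \emph{same} MMSE SINR $\eta(x)$, the Tse--Hanly solution of $\eta=P/(\sigma^2+xP/(1+\eta))$, and the chain rule gives $\Gamopt=\int_0^\alpha\log_2(1+\eta(x))\,dx$, which matches the Marchenko--Pastur log-det value (about $0.837$ bit for $P=\sigma^2=1$, $\alpha=1$). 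The self-consistent equation instead reads $\beta(x)=P/(\sigma^2+\int_0^x P\,dy/(1+\beta(y)))$; since $\beta$ is decreasing, $\int_0^x dy/(1+\beta(y))<x/(1+\beta(x))$, so its solution is strictly larger than $\eta(x)$ (roughly $0.642$ versus $0.618$ at $x=1$) and the resulting integral overshoots the capacity (about $0.846$). What your argument actually proves is $\Gamopt=\int_0^\alpha\log_2\bigl(1+\beta_x(x)\bigr)\,dx$, where for each $x$ the whole function $\beta_x(\cdot)$ solves \eqref{eq:implicit:beta} with $\alpha$ replaced by $x$: the interferer at $y<x$ must be weighted by its SINR \emph{in the subsystem truncated at $x$}, which is strictly smaller than its own SIC value $\betasic(y)$ because it faces more residual interference there. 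So you should either stop at the two-parameter statement, or restrict attention to the regime the paper actually uses (all users driven to a common target SINR, where the distinction is immaterial for the power-allocation recursion); do not plan on proving the nested consistency in general, because it is false.
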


Prop. \ref{prop:sinropt} enables us to extract an expression that is
analog to the SINR for the optimal filter. Similarly to the case of
$\betammse$ in Sec. \ref{sec:MMSEfilter}, the derivative of
this expression obeys the property
$P_k\frac{\partial\betasic_k}{\partial P_k}=\betasic_k$.

\section{Games and Equilibria}
\label{sec:nash}

From now on, we denote $\SINR_k=\beta_k$, whichever filter is actually used.

\subsection{Power Allocation Game}

A game with a unique strategy set for all users is defined by a triple $\{S,\PA,(u_k)_{k\in S}\}$
where $S$ is the set of \emph{players},
$\PA$ is the set of \emph{strategies},
and $(u_k)_{k\in S}$ is the set of \emph{utility functions}, $u_k: \PA^{\abs{S}}\to \RR$. 

In our setting, the players are simply the users, indexed by the set
$S^K=\{1,\dotsc,K\}$. The strategy for a mobile is its power
allocation $P_k$, which we will assume belongs to a compact interval
$\PA=[0,\Pmax]\subseteq \RR$.  The utility measures the gain of a user
as a result of the strategy this user plays.  In
\cite{wcnc.rodriguez03}, the author derives what he calls Throughput
to Power Ratio (TPR) under minimal requirements. The utility of user
$k$ is expressed
\begin{equation}
u_k = \frac{\Gami_k}{P_k}.
\label{eq:genexpuserutility}
\end{equation}
We denote $\Gami_k=\Gami(\beta_k)$, where $\Gami(\cdot)$ is the same
function for all users. In \eqref{eq:genexpuserutility}, $\Gami$ is at
least $C^2$ and should satisfy conditions detailed in
\cite{wcnc.rodriguez03} in order to obtain an ``interesting''
equilibrium.

For example, in the simulations, we consider the goodput
$\Gami\left(\beta_k\right)$, which is proportional to
$\left(1-e^{-\beta_k}\right)^M$ where $M$ is the number of bits
transmitted in a CDMA packet. Remark that the usual definition of
goodput would rather be considered proportional to
$q(\beta_k)=(1-\textrm{BER}_k)^M$, where $\textrm{BER}$ is the bit error rate.
However, this quantity is not zero when the transmitted power is
zero. Using this function in the utility would lead to the
unsatisfying conclusion that mobiles should not transmit at all, since
the (improbable) event of a correct guess gives them infinite utility
\cite{ieeePC.goodman00}.  Therefore, an adapted version of the goodput
is adopted, where a factor 2 is added before the BER. The performance
measure considered is hence proportional to
$q_2(\beta_k)=(1-2\textrm{BER}_k)^M$, leading to the expression
above. This function has the desirable property $q_2(0)=0$ and its
shape follows closely the shape of the original goodput
$q(\cdot)$. This is a relevant performance measure, as each mobile
wants to use its (limited) battery power to transmit the maximum
possible amount of information.

This utility is expressed in \emph{bits per joule}. In the
non-cooperative game setting, each user wants to selfishly maximize
its utility. A Nash equilibrium is obtained when no user can benefit
by unilaterally deviating from its strategy.

To obtain the maximum utility achievable by user $k$, we differentiate
$u_k$ with respect to the power $P_k$ and equate to 0. We obtain
\begin{equation}
P_k\frac{\partial\beta_k}{\partial P_k}\Gami'(\beta_k)-\Gami(\beta_k)=0.
\label{eq:deriveequal0}
\end{equation}
For all filters under consideration, \eqref{eq:implicit:betamf}, \eqref{eq:implicit:beta} and \eqref{eq:implicit:betasic} 
imply $P_k\frac{\partial\beta_k}{\partial P_k}=\beta_k$,
thus \eqref{eq:deriveequal0} reduces an equation on $\beta_k$
\begin{equation}
\beta_k\Gami'(\beta_k)-\Gami(\beta_k)=0.
\label{eq:equationbetastar}
\end{equation}

Eq. \eqref{eq:equationbetastar} is particularly interesting in the
case when there exists a unique solution $\betastar$.

The existence of a solution to \eqref{eq:equationbetastar} is
guaranteed as long as the function $\Gami(\cdot)$ is a quasiconcave
function of the SINR, i.e., there exists a point below which the
function is non-decreasing, and above which the function is
non-increasing \cite{ieeeCOM.saraydar02,wcnc.rodriguez03}.  In
addition, we assume that the function $\Gami(\cdot)$ takes value
$\Gami(0)=0$, so that users cannot achieve an infinite utility by not
transmitting.  This occurs for several functions $\Gami(\cdot)$ of
interest, in particular the goodput \cite{ieeeCOM.meshkati05}, which
we will use for simulations. Unfortunately, the capacity can not be
used as a function $\Gami(\cdot)$, since it leads to the trivial
result $\betastar = 0$ for this utility function. The uniqueness of
the solution $\betastar$ to \eqref{eq:equationbetastar} is due to fact
that the SINR of each user is a strictly increasing function of its
transmit power.  Given the target SINR $\betastar$, we obtain the
strategy of users in the next section.

\section{Power Allocation in the Nash Equilibrium}
\label{sec:powerallocation}

\subsection{Flat Fading}

In this subsection, we show that the results of
\cite{ieeeCOM.meshkati05} for Matched and MMSE filters are a special case of our setting when
$L=1$ (flat fading case). In addition, we derive the power allocation
for the Optimum filter.  When there is only one path, for each user
$k$, denoted by its index $\frac{k}{N}=x\in[0,\alpha]$, $h(f,x)$ does
not depend on $f$. Given the target SINR $\betastar$, we have explicit
expressions of the power with which user $k$ transmits for the various
receivers.  

In Appendix \ref{sec:proof:large}, we show that the influence of the strategy of a
player on the payoffs of other players is (asymptotically) ``small''.
It justifies the fact that we can obtain an equilibrium in the
asymptotic setting, without the need for players to possess all the
information on the system. Their local information is sufficient. In the asymptotic limit, we obtain results
similar to Wardrop equilibrium: the strategy used by each user does
not influence the strategy of other users.

\subsubsection{Matched filter} From Prop. \ref{prop:mf}, the continuous formulation is
\begin{equation*}
P(x) = \frac{\betastar\left(\sigma^2+\int_0^\alpha P(y)\abs{h(y)}^2dy\right)}{\abs{h(x)}^2}
\end{equation*}
or equivalently in a discrete form
\begin{equation}
P_k = \frac{\betastar\left(\sigma^2+\frac{1}{N}\sum_{j=1,j\neq k}^K P_j \abs{h_j}^2\right)}{\abs{h_k}^2}.
\label{eq:mfflatdiscrete}
\end{equation}
Summing \eqref{eq:mfflatdiscrete} over $k=1,\dotsc,K$, we obtain a closed form expression for the minimum power with which user $k$ transmits when using the matched filter
\begin{equation}
P_k = \frac{1}{\abs{h_k}^2}\frac{\sigma^2\betastar}{1-\alpha{\betastar}}\ \textrm{for}\ \alpha<\frac{1}{\betastar}.
\label{eq:mfflatminpower}
\end{equation}

\subsubsection{MMSE filter} From Prop. \ref{prop:mmse}, the continuous formulation is
\begin{equation*}
P(x) = \frac{\betastar\left(\sigma^2+\frac{1}{1+\betastar}\int_0^\alpha P(y)\abs{h(y)}^2 dy\right)}{\abs{h(x)}^2}.
\end{equation*}
or equivalently in a discrete form
\begin{equation}
P_k = \frac{\betastar\left(\sigma^2+\frac{1}{1+\betastar}\frac{1}{N}\sum_{j=1,j\neq k}^K P_j \abs{h_j}^2\right)}{\abs{h_k}^2}.
\label{eq:flatdiscrete}
\end{equation}
Summing \eqref{eq:flatdiscrete} over $k=1,\dotsc,K$, we obtain a closed form expression for the minimum power with which user $k$ transmits when using the MMSE filter
\begin{equation}
P_k = \frac{1}{\abs{h_k}^2}\frac{\sigma^2\betastar}{1-\alpha\frac{\betastar}{1+\betastar}}\ \textrm{for}\ \alpha<1+\frac{1}{\betastar}.
\label{eq:flatminpower}
\end{equation}
Both \eqref{eq:mfflatminpower} and \eqref{eq:flatminpower} are the same results as in \cite{ieeeCOM.meshkati05}. 

\subsubsection{Optimum filter} Each user maximizes its utility for a 
SINR equal to $\betastar$. However, in the case of the optimum filter,
the SINR is not defined directly.  It is nevertheless possible to
extract an equivalent quantity from the expression of the capacity,
since the value of the capacity of user $k$ at the equilibrium is
given by $\Gamoptstar = \frac{1}{N}\log_2\left(1+\betastar\right)$.
\begin{proposition}
\label{prop:optflat}
The power allocation is given by 
\begin{equation}
P_k = \frac{1}{\abs{h_k}^2}\frac{\sigma^2\betaplus}{1-\alpha\frac{\betaplus}{1+\betaplus}}\ \textrm{for}\ \alpha<1+\frac{1}{\betaplus}
\label{eq:optflatminpower}
\end{equation}
where $\betaplus$ is the solution to 
\begin{multline}
\alpha\log_2\left(1+\betaplus\right)-\alpha\log_2(e)\frac{\betaplus}{1+\betaplus}\\
+\log_2\left(1+\frac{1}{1+\betaplus}\frac{\alpha\betaplus}{1-\alpha\frac{\betaplus}{1+\betaplus}}\right) = \alpha\log_2\left(1+\betastar\right). 
\label{eq:optmmsebetaplusprop}
\end{multline}
\end{proposition}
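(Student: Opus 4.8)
The plan is to reduce the claim to the relation between optimum and MMSE capacity established in Prop.~\ref{prop:opt}, evaluated in the flat fading regime. First I would pin down the capacity that each user must achieve at the Nash equilibrium. Since the property $P_k\frac{\partial\betasic_k}{\partial P_k}=\betasic_k$ holds for the optimum filter (stated after Prop.~\ref{prop:sinropt}), the first-order condition \eqref{eq:deriveequal0} for maximizing $u_k=\Gami(\betasic_k)/P_k$ reduces, exactly as for the other receivers, to \eqref{eq:equationbetastar}, whose solution is $\betastar$. Hence at equilibrium $\betasic_k=\betastar$ for every user, each user's capacity equals $\Gamoptstar=\frac{1}{N}\log_2(1+\betastar)$, and summing over the $K=\alpha N$ users gives the per-dimension equilibrium capacity $\Gamopt=\alpha\log_2(1+\betastar)$. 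This is the right-hand side of \eqref{eq:optmmsebetaplusprop}.

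Next I would parametrize the sought power allocation by an effective MMSE SINR $\betaplus$: I posit that users transmit with the MMSE-shaped profile \eqref{eq:flatminpower} in which $\betastar$ is replaced by $\betaplus$, which is exactly \eqref{eq:optflatminpower}. By the flat fading MMSE analysis of Sec.~\ref{sec:MMSEfilter}, such a profile is precisely the one that makes every user's MMSE SINR equal to the constant $\betammse(x)=\betaplus$. To make the later substitution explicit, I would integrate the continuous MMSE relation to obtain the total received power $\int_0^\alpha P(y)\abs{h(y)}^2dy=\frac{\alpha\betaplus\sigma^2}{1-\alpha\frac{\betaplus}{1+\betaplus}}$, valid precisely when $\alpha<1+\frac{1}{\betaplus}$, which is the stated range.

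The heart of the computation is then to evaluate $\Gamopt$ from Prop.~\ref{prop:opt} under the constant profile $\betammse(x)=\betaplus$ and flat fading $\rho(f,x)=P(x)\abs{h(x)}^2$ (independent of $f$). The first term gives $\Gammmse=\alpha\log_2(1+\betaplus)$; the second gives $\alpha\log_2(e)\frac{\betaplus}{1+\betaplus}$; and in the third term the inner integral becomes $\frac{1}{\sigma^2}\frac{1}{1+\betaplus}\int_0^\alpha P(x)\abs{h(x)}^2dx$, which upon inserting the total-power expression collapses the $f$-integral to $\log_2\left(1+\frac{1}{1+\betaplus}\frac{\alpha\betaplus}{1-\alpha\frac{\betaplus}{1+\betaplus}}\right)$. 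Assembling the three terms reproduces the left-hand side of \eqref{eq:optmmsebetaplusprop}; equating it to $\alpha\log_2(1+\betastar)$ from the first paragraph yields the defining equation for $\betaplus$, and \eqref{eq:optflatminpower} follows. I expect the main obstacle to be conceptual rather than computational: justifying that the optimum-filter equilibrium power must take the MMSE form, i.e., that introducing the single scalar $\betaplus$ faithfully captures the equilibrium, and checking that \eqref{eq:optmmsebetaplusprop} has a unique admissible root $\betaplus\leq\betastar$ within the range $\alpha<1+\frac{1}{\betaplus}$.
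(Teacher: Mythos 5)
Your proposal is correct and follows essentially the same route as the paper's proof in Appendix \ref{sec:proof:optflat}: rewrite the capacity relation of Prop.~\ref{prop:opt} with a target MMSE SINR $\betaplus$, substitute the MMSE-shaped power profile to evaluate $\int_0^\alpha P(y)\abs{h(y)}^2dy=\frac{\alpha\sigma^2\betaplus}{1-\alpha\frac{\betaplus}{1+\betaplus}}$, and equate the resulting $\Gamopt$ to $\alpha\log_2(1+\betastar)$. The extra details you supply (deriving the right-hand side from the equilibrium condition $\betasic_k=\betastar$, and flagging the need to justify the MMSE form of the equilibrium PA and uniqueness of the root) are sensible elaborations of points the paper leaves implicit.
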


\begin{proof}
See Appendix \ref{sec:proof:optflat}.
\end{proof}

\subsection{Frequency Selective Fading}

In the context of frequency selective fading, for each user $k$, denoted by its index $\frac{k}{N}=x\in[0,\alpha]$, there are $L>1$ paths with respective attenuations 
$h_{\ell}(x),\ \ell=1,\dotsc,L$, which are i.i.d.~random variables with some known distribution.
We suppose that $h_{\ell}(x)$ has mean zero, and the distributions of the real part and imaginary part of $h_{\ell}(x)$ are even functions, as for example the Gaussian distribution,
which we consider in the simulations. 
$h(f,x)$ depends on $f$ through
$h(f,x)=\sum_{\ell=1}^{L}h_{\ell}(x)e^{-2\pi i f (\ell-1)}$. 
Given the target SINR $\betastar$, the Nash equilibrium power allocation is determined by implicit equations for the various receivers.
 
\subsubsection{Matched filter} The continuous formulation is
\begin{multline*}
P(x) = \betastar\cdot\\ \frac{\sigma^2 H(x)+\int_0^1\int_0^\alpha P(y)\abs{h(f,y)}^2\abs{h(f,x)}^2dfdy}{\left(H(x)\right)^2}
\end{multline*}
or equivalently in a discrete form
\begin{multline}
P_k = \betastar\cdot\\ \frac{\frac{\sigma^2}{N}\sum_{n=1}^N\abs{\dd{k}{n}}^2+\frac{1}{N}\sum_{n=1}^N\abs{\dd{k}{n}}^2\frac{1}{N}\sum_{j\neq k}^K P_j\abs{\dd{j}{n}}^2}{\left(\frac{1}{N}\sum_{n=1}^N\abs{\dd{k}{n}}^2\right)^2}.
\label{eq:mffsdiscrete}
\end{multline}
In \eqref{eq:mffsdiscrete}, $\dd{k}{n}=h\left(\frac{n-1}{N},\frac{k}{N}\right)$.

In this expression, the power allocation of user $k$ seems to depend
on the power allocation and fading realization of all the other
users. However, when the number of users tends to infinity, the
strategy of any single user does not have any influence on the payoff
of user $k$, as shown in Appendix \ref{sec:proof:large}. Hence, the
appropriate framework is non-atomic games.  The expression
$\frac{1}{N}\sum_{j=1}^K P_j\abs{\dd{j}{n}}^2$ is asymptotically a
constant (not depending on $n$), denoted $\Cste$.
\begin{equation}
\Cste = \frac{\alpha\betastar\sigma^2\frac{1}{K}\sum_{j=1}^K\frac{\abs{\dd{j}{n}}^2}{\E_j}}{1-\alpha\betastar\frac{1}{K}\sum_{j=1}^K\frac{\abs{\dd{j}{n}}^2}{\E_j}}
\label{eq:cstemf}
\end{equation}
where $\E_j = \frac{1}{N}\sum_{m=1}^N\abs{\dd{j}{m}}^2$. 

As $K\to\infty$, we can apply the Central Limit Theorem to the sum of random variables
\begin{equation}
\frac{1}{K}\sum_{j=1}^K\frac{\abs{\dd{j}{n}}^2}{\E_j}.
\label{eq:converge}
\end{equation}
It tends to its expectation, which is equal to $1$ (see Appendix \ref{sec:proof:converge}). 

It follows that asymptotically $\Cste=\frac{\alpha\betastar\sigma^2}{1-\alpha\betastar}$
(and simulations in Sec. \ref{sec:simul} prove that this approximation is valid for moderate finite values of $N$). 
From \eqref{eq:mffsdiscrete}, we obtain a formula similar to \eqref{eq:mfflatminpower}
\begin{equation}
P_k = \frac{1}{\E_k}\frac{\sigma^2\betastar}{1-\alpha{\betastar}}\ \textrm{for}\ \alpha<\frac{1}{\betastar}.
\label{eq:mffsminpower}
\end{equation}

\subsubsection{MMSE filter} The continuous formulation is
\begin{equation}
P(x) = \frac{\betastar}{\int_0^1\frac{\abs{h(f,x)}^2 df}{\sigma^2+\frac{1}{1+\betastar}\int_0^\alpha P(y)\abs{h(f,y)}^2 dy}}
\label{eq:fscontinuous}
\end{equation}
or equivalently in a discrete form
\begin{equation}
P_k = \frac{\betastar}{\frac{1}{N}\sum_{n=1}^N\frac{\abs{\dd{k}{n}}^2}{\sigma^2+\frac{1}{1+\betastar}\frac{1}{N}\sum_{j=1,j\neq k}^K P_j \abs{\dd{j}{n}}^2}}.
\label{eq:fsdiscrete}
\end{equation}
In \eqref{eq:fsdiscrete}, $\dd{k}{n}=h\left(\frac{n-1}{N},\frac{k}{N}\right)$.

As previously, when the number of users tends to infinity,
$\frac{1}{N}\sum_{j=1}^K P_j\abs{\dd{j}{n}}^2$ is asymptotically a
constant (not depending on $n$), denoted $\Cste$.
\begin{equation}
\Cste = \frac{\alpha\betastar\sigma^2\frac{1}{K}\sum_{j=1}^K\frac{\abs{\dd{j}{n}}^2}{\E_j}}{1-\frac{\alpha\betastar}{1+\betastar}\frac{1}{K}\sum_{j=1}^K\frac{\abs{\dd{j}{n}}^2}{\E_j}}
\label{eq:cstemmse}
\end{equation}
where $\E_j = \frac{1}{N}\sum_{m=1}^N\abs{\dd{j}{m}}^2$.

It follows that asymptotically $\Cste=\frac{\alpha\betastar\sigma^2}{1-\alpha\frac{\betastar}{1+\betastar}}$, we obtain a formula similar to \eqref{eq:flatminpower}
\begin{equation}
P_k = \frac{1}{\E_k}\frac{\sigma^2\betastar}{1-\alpha\frac{\betastar}{1+\betastar}}\ \textrm{for}\ \alpha<1+\frac{1}{\betastar}.
\label{eq:fsminpower}
\end{equation}

\subsubsection{Optimum filter} Each user maximizes its utility for a 
SINR equal to $\betastar$. However, in the case of the optimum filter,
the SINR is not defined directly.  It is nevertheless possible to
extract an equivalent quantity from the expression of the capacity,
since the value of the capacity of user $k$ at the equilibrium is
given by $\Gamoptstar = \frac{1}{N}\log_2\left(1+\betastar\right)$.
\begin{proposition}
\label{prop:optfs}
Asymptotically, as $N,K\rightarrow\infty$, the power allocation is given by 
\begin{equation}
P_k = \frac{1}{\E_k}\frac{\sigma^2\betaplus}{1-\alpha\frac{\betaplus}{1+\betaplus}}\ \textrm{for}\ \alpha<1+\frac{1}{\betaplus}
\label{eq:optfsminpower}
\end{equation}
where $\betaplus$ is the solution to 
\begin{multline}
\alpha\log_2\left(1+\betaplus\right)-\alpha\log_2(e)\frac{\betaplus}{1+\betaplus}\\
+\log_2\left(1+\frac{1}{1+\betaplus}\frac{\alpha\betaplus}{1-\alpha\frac{\betaplus}{1+\betaplus}}\right) = \alpha\log_2\left(1+\betastar\right). 
\label{eq:optmmsebetapluspropfs}
\end{multline}
\end{proposition}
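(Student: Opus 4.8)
The plan is to follow the flat-fading argument of Appendix~\ref{sec:proof:optflat} almost verbatim, the only genuinely new step being the frequency self-averaging that already appeared in the matched-filter and MMSE treatments. At the Nash equilibrium for the optimum filter each user attains capacity $\Gamoptstar=\frac{1}{N}\log_2(1+\betastar)$, so summing over the $K=\alpha N$ users fixes the target system capacity $\Gamopt=\alpha\log_2(1+\betastar)$. I would then look for a power allocation of the MMSE shape \eqref{eq:fsminpower}, but with an auxiliary SINR $\betaplus$ substituted for $\betastar$,
\begin{equation*}
P_k=\frac{1}{\E_k}\frac{\sigma^2\betaplus}{1-\alpha\frac{\betaplus}{1+\betaplus}},
\end{equation*}
and determine $\betaplus$ from the capacity constraint. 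By the same computation that produced \eqref{eq:fsminpower}, this allocation drives the interference term
\begin{equation*}
\Cste=\int_0^\alpha P(y)\abs{h(f,y)}^2dy
\end{equation*}
to the constant $\frac{\alpha\betaplus\sigma^2}{1-\alpha\frac{\betaplus}{1+\betaplus}}$, independent of the frequency $f$, and makes the induced MMSE profile uniform, $\betammse(x)=\betaplus$ for all $x$. The $f$-independence is exactly what separates this proof from the flat-fading one: there $h(f,x)$ is constant in $f$ by hypothesis, whereas here it must be recovered asymptotically from the central-limit argument of Appendix~\ref{sec:proof:converge}, by which $\frac{1}{K}\sum_j\frac{\abs{\dd{j}{n}}^2}{\E_j}\to1$.

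With these MMSE quantities in hand I would substitute them into the optimum/MMSE capacity identity of Proposition~\ref{prop:opt}. Setting $\betammse(x)=\betaplus$ collapses its leading term $\Gammmse$ to $\alpha\log_2(1+\betaplus)$ and its second term to $-\alpha\log_2(e)\frac{\betaplus}{1+\betaplus}$, while the $f$-independence of $\Cste$ reduces the outer integral $\int_0^1(\cdot)df$ in the third term to a single evaluation, $\log_2\bigl(1+\frac{1}{\sigma^2}\frac{\Cste}{1+\betaplus}\bigr)$; inserting the value of $\Cste$ turns this into $\log_2\bigl(1+\frac{1}{1+\betaplus}\frac{\alpha\betaplus}{1-\alpha\frac{\betaplus}{1+\betaplus}}\bigr)$. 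A short check confirms self-consistency: with $\betammse(x)=\betaplus$ the denominator in \eqref{eq:implicit:beta} reduces to $\frac{\sigma^2}{1-\alpha\betaplus/(1+\betaplus)}$ and $P(x)H(x)=\frac{\sigma^2\betaplus}{1-\alpha\betaplus/(1+\betaplus)}$, so that \eqref{eq:implicit:beta} indeed returns $\betammse(x)=\betaplus$.

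Equating the resulting closed-form $\Gamopt$ with the target $\alpha\log_2(1+\betastar)$ then yields precisely \eqref{eq:optmmsebetapluspropfs}, which defines $\betaplus$ implicitly, while the admissibility range $\alpha<1+\frac{1}{\betaplus}$ is forced by positivity of the denominator $1-\alpha\frac{\betaplus}{1+\betaplus}$ in the power allocation. The main obstacle is the frequency self-averaging invoked above---justifying that $\int_0^\alpha P(y)\abs{h(f,y)}^2dy$ truly loses its dependence on $f$ in the limit---but since this is the same law-of-large-numbers phenomenon already established for the other receivers in Appendix~\ref{sec:proof:converge}, once it is granted the remaining manipulations are purely algebraic.
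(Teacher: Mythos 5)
Your proposal is correct and follows essentially the same route as the paper, which simply states that the proof mirrors that of Prop.~\ref{prop:optflat}: you rewrite the capacity identity of Prop.~\ref{prop:opt} with a uniform MMSE target $\betaplus$, use the MMSE-shaped allocation to evaluate the interference constant, and equate to $\alpha\log_2(1+\betastar)$. Your explicit treatment of the frequency self-averaging step via Appendix~\ref{sec:proof:converge} is exactly the ingredient the paper leaves implicit when it defers to the flat-fading argument.
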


\begin{proof}
The proof is similar to the proof of Prop. \ref{prop:optflat}. 
\end{proof}

We observe that for all filters considered, the optimal PA is a
constant times the inverse of the \emph{total} energy of the channel
$\E_j$. Via Parseval's Theorem, $\E_j = \sum_{\ell=1}^L
\abs{h_{\ell}\left(\frac{j}{N}\right)}^2$. It is a sum of
i.i.d. random variables. As the number of paths increases, the optimal
PA tends to a uniform PA.  This is an effect similar to ``channel
hardening'' \cite{ieeeIT.hochwald04}: as the number of paths
increases, the variance of the distribution of the channel energy
decreases and the Nash equilibrium PA becomes more and more uniform
for all users.

\section{Successive Interference Cancellation}
\label{sec:SIC}

The optimal filter gives a bound on the performance that can be
achieved through (non-linear) filtering at the base station. In order
to improve the performance of the system, we introduce Successive
Interference Cancellation (SIC) \cite{ieeeJSAC.muller01} at the base
station. Under the assumption of perfect decoding, SIC improves
immensely the performance of linear filters (Matched Filter or MMSE
Filter). The MMSE SIC filter actually achieves the optimum filter
bound, under the assumption of perfect decoding. The principle of SIC
receivers is quite simple: users are ordered and are decoded
successively. At each step, supposing that the user has been encoded
at the appropriate decoding rate, the signal is decoded and its
contribution to the interference is then perfectly subtracted. This
removes some of the inter-user interference and therefore increases
the $\SINR$ of the following decoded users.

The challenge is that the users must transmit at the appropriate rate
to avoid the catastrophic occurrence of imperfect decoding.  Usually,
the ordering of users is done in a centralized way, at the base
station which then advertises it to the users. However, for the
protocol to remain distributed, users should be able to decide,
based on their local information, at which rate to transmit. 

At equilibrium, the rate is determined by the SINR $\betastar$, and it
is the transmission power of the user that is determined according to
its rank of decoding. The equilibrium PA can be determined in a simple
manner when the number of multipaths is finite ($L<\infty$) and the
number of users is very high ($K\to\infty$).  In
Sec.~\ref{sec:wholelaw}, we make use of the fact that the whole law of
$\E_j$ is realized in this case, so that users automatically know
their rank of decoding.  Another manner to give a (random) ordering of
decoding is to introduce an additional degree of liberty in the
system. In Sec.~\ref{sec:correlated}, we develop a correlated game
framework that enables users to learn their rank of decoding in a
simple way. In the following, we assume that each user has a unique
has a unique i.d.~number $j$ ranging between 1 to $K$. 

\subsection{Ordering when $K\to\infty$}
\label{sec:wholelaw}

 
If the number of users $K\to\infty$, with $L$ fixed, the whole law of
the total channel energy will be realized.  Assume the base station
advertises to the users that they will be decoded by decreasing total
channel energy.  Each user knows, according to the realization of its
fading, its rank in the decoding order given by $K$ times $1$ minus
the cumulative distribution function $D(\cdot)$ of the total channel
energy $\E_j$.
\begin{equation*}
\textrm{rank}_j = K (1-D(\E_j)).
\end{equation*}

In case that the base station
advertises to the users that they will be decoded by increasing total
channel energy, user $j$  will have rank $\textrm{rank}_j = K D(\E_j)$.


\subsection{Correlated Equilibrium}
\label{sec:correlated}

We wish to introduce a simple mechanism that enables players
to coordinate and to know in which order they will be decoded. 
We place ourselves in the context of correlated
games. The notion of correlated equilibrium was introduced by
R. Aumann\footnote{Prof. R. Aumann has received in 2005 the Nobel
prize in economy for his contributions to game theory, together with
Thomas Schelling.} in \cite{paper.aumann74} and further studied in
\cite{paper.aumann87,paper.hart89,paper.neyman97}.  They represent a 
generalization of Nash equilibrium. The important
feature of correlated games is the presence of an
\emph{arbitrator}. An arbitrator needs not have any intelligence or
knowledge of the game, it needs only to send random (private or
public) signals to the players that are independent of all other data
in the game. In the context of non-cooperative games, each player has
the possibility not to consider the signal(s) it
receives. Coordination between players turns out to be useful also in
the case of cooperative optimization.  The signals enable joint
randomization between the strategies of the players, possibly
resulting in equilibria with higher payoffs.  The concept of
correlated games was recently introduced in a networking context in
\cite{networking.altman06}, where the authors consider a simple ALOHA
setting.

The simplest and most intuitive coordination mechanism is given by a
common signal which users as well as the base station overhear before
each transmission. There are $K!$ possible permutations of $K$
users. Hence, the arbitrator broadcasts a signal to the users
belonging to the set $\{0,\dotsc,K!-1\}$. Each of these numbers
corresponds to a permutation $\pi$ of $\{1,\dotsc,K\}$ that gives the
(random) ordering of decoding as $\textrm{rank}_j=\pi(j)$. The users
can then adjust their transmit power according to this ordering.  In
terms of size of the message, this is equivalent to the case when the
base station decides the decoding order and broadcasts it to the
users, or sends $K$ individual messages of $\ln(K)$ bits containing
the rank, since $\ln(K!)=K\ln(K)+o(K\ln(K))$. However, there is no
need of either any knowledge of the system or computations at the base
station in the case of the correlated mechanism.

\subsection{SIC Power Allocations}

In both cases, once the users know their order, they can calculate
their transmit power according to the filter that is used. The
equilibrium still occurs when all users reach the SINR $\betastar$. A
single user will not benefit by deviating, since it would decrease its
utility. From now on, index $k$ denotes the rank of decoding.

In the case of the matched filter with SIC, the SINR of the user decoded at rank $k$ is
\begin{multline}
\betamf_k =\\  \frac{P_k\left(\frac{1}{N}\sum_{n=1}^N\abs{\dd{k}{n}}^2\right)^2}{\frac{\sigma^2}{N}\sum_{n=1}^N\abs{\dd{k}{n}}^2+\frac{1}{N^2}\sum_{j>k}\sum_{n=1}^N P_j\abs{\dd{j}{n}}^2\abs{\dd{k}{n}}^2}.
\label{eq:finite:betamfsic}
\end{multline}
From \eqref{eq:finite:betamfsic}, we get the equilibrium PA of user $k$ as
\begin{multline}
P_k = \betastar\cdot\\\frac{\frac{\sigma^2}{N}\sum_{n=1}^N\abs{\dd{k}{n}}^2+\frac{1}{N^2}\sum_{j>k}\sum_{n=1}^N P_j\abs{\dd{j}{n}}^2\abs{\dd{k}{n}}^2}{\left(\frac{1}{N}\sum_{n=1}^N\abs{\dd{k}{n}}^2\right)^2}.
\label{eq:finite:pamfsic}
\end{multline}
In the case of the MMSE filter with SIC, the SINR of the user decoded at rank $k$ is 
\begin{equation}
\betammse_k = P_k\frac{1}{N}\sum_{n=1}^N\abs{\dd{k}{n}}^2\frac{1}{\sigma^2+\frac{1}{N}\sum_{j>k} \frac{P_j\abs{\dd{j}{n}}^2}{1+\betammse_j}}.
\label{eq:finite:betammsesic}
\end{equation}
From \eqref{eq:finite:betammsesic}, we get the equilibrium PA of user $k$ as
\begin{equation}
P_k = \frac{\betastar}{\frac{1}{N}\sum_{n=1}^N\frac{\abs{\dd{k}{n}}^2}{\sigma^2+\frac{1}{1+\betastar}\frac{1}{N}\sum_{j>k}^K P_j \abs{\dd{j}{n}}^2}}.
\label{eq:finite:pammsesic}
\end{equation}

For flat fading, a simple recursion gives the equilibrium PA (see Appendix \ref{sec:PAsicflat}).
We obtain respectively
\begin{gather}
P_{k}^{\textrm{MF}}=\frac{\sigma^2\betastar}{\abs{h_{k}}^2}\left(1+\frac{1}{N}\betastar\right)^{K-k},\label{eq:pamfsicflat}\\
P_{k}^{\textrm{MMSE}}=\frac{\sigma^2\betastar}{\abs{h_{k}}^2}\left(1+\frac{1}{N}\frac{\betastar}{1+\betastar}\right)^{K-k}.\label{eq:pammsesicflat}
\end{gather}

As far as frequency-selective fading is concerned, this gives us the
form of the asymptotic expressions. Asymptotically, the power
allocation of one user will not depend on the PA of the other users,
as shown in Appendix \ref{sec:proof:large}. With a similar reasoning as in
Sec.~\ref{sec:powerallocation}, the expressions mimic
\eqref{eq:pamfsicflat} and \eqref{eq:pammsesicflat} with the total channel energy $\E_k$
replacing $\abs{h_k}^2$, i.e.,
\begin{gather}
P_{k}^{\textrm{MF}}=\frac{\sigma^2\betastar}{\E_{k}}\left(1+\frac{1}{N}\betastar\right)^{K-k},\label{eq:pamfsic}\\
P_{k}^{\textrm{MMSE}}=\frac{\sigma^2\betastar}{\E_{k}}\left(1+\frac{1}{N}\frac{\betastar}{1+\betastar}\right)^{K-k}.\label{eq:pammsesic}
\end{gather}
These expressions are also validated by simulations. 

Since MMSE SIC with perfect decoding is equivalent to the optimum
filter, we thus obtain a second possible equilibrium PA for the
optimum filter. In Sec.~\ref{sec:simul}, we investigate which is the
PA which minimizes total amount of power needed to transmit at
equilibrium SINR. In the case of automatic ordering of the users, one
question is whether it is best to order the users by increasing or
decreasing total fading energy. The answer is the following: it is
always best to decode the users by decreasing total channel energy
$\E_1<\dotsb<\E_k$ (see Appendix \ref{sec:ordering}).

An interesting feature of equilibrium PA \eqref{eq:pamfsic} and
\eqref{eq:pammsesic} is that there is no limitation on the number of
users than can be accomodated by the system, contrary to the previous
case of \eqref{eq:mffsminpower}, \eqref{eq:fsminpower} and
\eqref{eq:optfsminpower}. The limitation is only imposed by the increasing
power needed for each new user decoded last, which grows without bound
as an exponential.


\section{Numerical Results}
\label{sec:simul}

In all the following, we consider that $\Pmax$ is chosen sufficiently
high so that users can actually transmit at the equilibrium PA
values. For the simulations, we consider the usual case of Rayleigh
fading.  Although Rayleigh distribution is not bounded from above,
simulations show that the results still hold.

We consider a CDMA system with $K=32$ users and a spreading factor
$N=256$. The noise variance is $\sigma^2=10^{-10}$. For a number of
bits in a CDMA packet $M=100$, the goodput is
$\Gami(\beta)=\left(1-e^{-\beta}\right)^{100}$ (see
\cite{ieeeCOM.meshkati05}), and $\betastar=6.48$. The capacity
achieved at the Nash Equilibrium is
$C=\alpha\log_2\left(1+\betastar\right)=0.39$ bits/s. Unfortunately,
the capacity itself cannot be used as a relevant performance measure
in the definition of the utility, because in this case the maximal
utility is obtained when not sending.


We have performed simulations over 10000
realizations. Fig.~\ref{fig:theoretic} shows the good fit of theoretic
values calculated directly from \eqref{eq:mffsminpower},
\eqref{eq:fsminpower} and \eqref{eq:optfsminpower} with those
simulations. The values of the utility do not depend on the number of
multipaths. We see that optimum filter requires the minimal power, and
matched filter the maximal power to achieve the required goodput.

In Fig.~\ref{fig:1} we have plotted the average utility versus the
number of multipaths $L$. Multipaths are supposed to be
i.i.d.~Rayleigh distributed with variance $1/L$, in order for the
channels to have the same energy. Two cases are considered: the
utility obtained in the Nash equilibrium, according to the PA given by
\eqref{eq:mffsdiscrete} and \eqref{eq:fsdiscrete}, and the utility in
the case where all nodes transmit at the same power. For comparison
purposes, the sum of the uniform powers is equal to the sum of the
powers used in the Nash equilibrium. In addition, simulations (not
reproduced here) show that this value gives the higher average utility
for a uniform PA.
The utility does not vary with
$L$ in the Nash equilibrium: the Central Limit Theorem applies to the
utility, which is a constant times the random variable $\E_k$ in the
Nash equilibrium. The utility with uniform powers is always inferior
to the utility in the Nash equilibrium. However, as $L$ increases, the
gap decreases, as the variance of $\E_k$ decreases, and the
equilibrium PA becomes uniform.

In Fig.~\ref{fig:powerallocs} we have plotted the average of the
inverse power of the users in the Nash equilibrium for each of the
investigated schemes. We plot the average inverse power because of the
direct relation to the utility for the users. The higher this average,
the higher the utility for the user. The SIC filters are always more
efficient than their linear counterparts. However, for a load
$\alpha<0.12$ and optimum filter\footnote{The value of $\alpha$ is
obtained as solution of the equation
$\alpha\betastar\frac{\betastar}{1+\betastar}(1-\alpha\frac{\betaplus}{1+\betaplus})=\betaplus(1-\exp(-\alpha\frac{\betastar}{1+\betastar}))$.},
it is better to use the first variation of PA \eqref{eq:optfsminpower}
than use MMSE SIC \eqref{eq:pammsesic}. This relation is reversed when
$\alpha>0.12$. In addition to the theoretical curves, Monte-Carlo
simulations were performed both with random ordering (circles) and
ordering by decreasing total channel energy (crosses), for $L=8$
multipaths. Simulations show that the optimal ordering improves the
power efficiency of the successive interference cancellation filters.

In Fig.~\ref{fig:ordering}, we investigate the amelioration provided
by optimal ordering as a function of the number of multipaths. The
simulations are done for $K=128$ users, in order to be in the
``interesting'' zone $\alpha>0.12$.  As expected, as the number of
paths increases, the total channel energy is more and more the same
for each channel and the gain provided by ordering the users
decreases. However, when the number of users is very large and they
benefit from automatic ordering, we see that the utility with the MMSE
SIC equilibrium PA is the maximal utility that can be obtained in the
non-cooperative setting.

\begin{figure}
\centerline{\includegraphics[width=9cm]{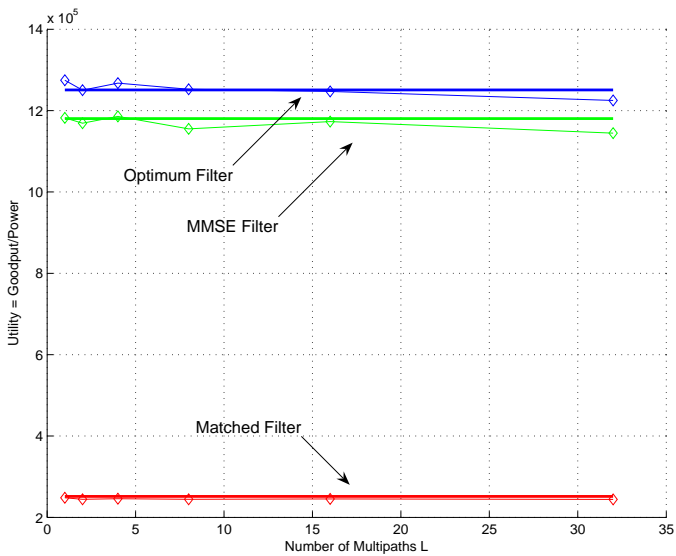}}
\caption{Comparison of theoretic values and simulations for utilities in the Nash equilibrium.}
\label{fig:theoretic}
\end{figure}

\begin{figure}
\centerline{\includegraphics[width=9cm]{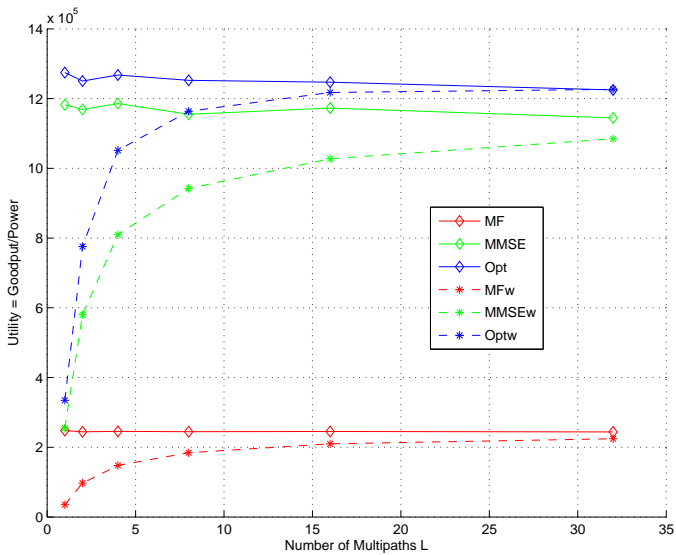}}
\caption{Simulation of utilities in the Nash equilibrium and constant power allocations versus $L$.}
\label{fig:1}
\end{figure}

\begin{figure}
\centerline{\includegraphics[width=9cm]{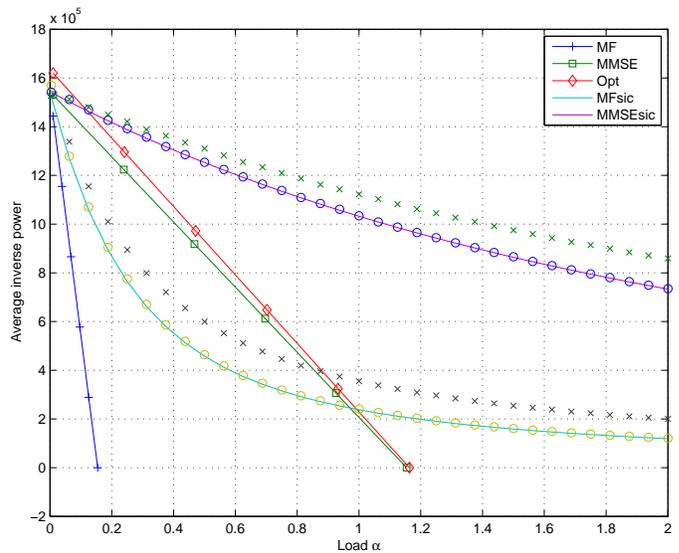}}
\caption{Average inverse power used by the different filters.}
\label{fig:powerallocs}
\end{figure}

\begin{figure}
\centerline{\includegraphics[width=9cm]{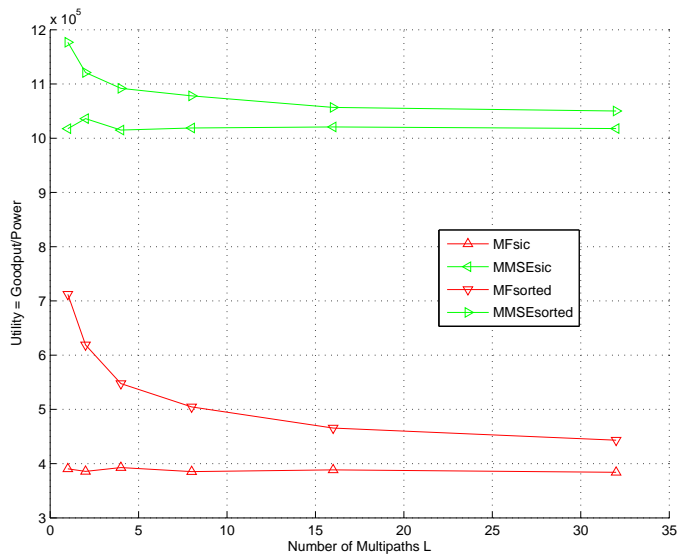}}
\caption{Simulation of utilities in the Nash equilibrium with SIC filter with and without optimal ordering, versus $L$.}
\label{fig:ordering}
\end{figure}

\section{Conclusion}
\label{sec:conclusion}

Using tools of random matrices, we have derived the equilibrium power
allocation in a game-theoretic framework applied to asymptotic CDMA
with cyclic prefix, under frequency-selective fading. Three receivers
are considered: matched filter, MMSE and optimum filter (given by
Shannon's capacity). In addition, distributed ordering mechanisms are
introduced and the successive interference cancellation variants of
the linear filters are studied. For each user, this power allocation
depends only on the total energy of the channel of the user under
consideration. For a frequency-flat channel, the power allocation
among users is dis-uniform, whereas when the number of multipaths
increases, the power allocation tends more and more to a uniform one.

\section{Appendix}

\subsection{Proof of Prop. \ref{prop:opt}}
\label{sec:proof:opt}

Notice that when $\sigma^2\rightarrow\infty$, $\Gamopt=0$, $\Gammmse=0$ and $\betammse(x)=\beta(x)=0$. Thus we only have to prove that the derivatives of either side
of \eqref{eq:optmmseeq} are equal.

Using $\rho(f,x)=P(x)\abs{h(f,x)}^2$, \eqref{eq:implicit:beta} can be rewritten
\begin{equation}
\beta(x)=\int_0^1\frac{\rho(f,x)df}{\sigma^2+\int_0^{\alpha}\frac{\rho(f,y)2dy}{1+\beta(y)}}.
\label{eq:implicit:betarho}
\end{equation}

From \eqref{eq:mrztheorem}, $\int_0^1 \rho(f,x) u(f,-\sigma^2)df$
satisfies the same implicit equation \eqref{eq:implicit:betarho} as $\beta(x)$
and thus
\begin{equation}
u(f,-\sigma^2) = \frac{1}{\int_0^\alpha\frac{\rho(f,y)dy}{1+\beta(y)}+\sigma^2}.
\label{eq:link}
\end{equation}
Using \eqref{eq:implicit:betarho} and \eqref{eq:link}, we can rewrite
\begin{align*}
\int_0^1 u&(f,-\sigma^2)df-\frac{1}{\sigma^2}\\ &=\int_0^1\frac{1}{\int_0^\alpha\frac{\rho(f,y)dy}{1+\beta(y)}+\sigma^2}df-\int_0^1\frac{1}{\sigma^2}df\\
&=\int_0^1\frac{-\int_0^\alpha\frac{\rho(f,x)}{1+\beta(x)}dx}{\sigma^2\left(\int_0^\alpha\frac{\rho(f,y)dy}{1+\beta(y)}+\sigma^2\right)}df\\
&=\int_0^\alpha\frac{\frac{-1}{\left(1+\beta(x)\right)}}{\sigma^2}\int_0^1\frac{\rho(f,x)df}{\int_0^\alpha\frac{\rho(f,y)dy}{1+\beta(y)}+\sigma^2} dx\\
&=-\int_0^\alpha\frac{\beta(x)}{\sigma^2\left(1+\beta(x)\right)}dx.
\end{align*}
Thus from \eqref{eq:scderived}
\begin{equation}
\frac{\partial\Gamopt}{\partial\sigma^2}=-\log_2(e)\int_0^\alpha\frac{\beta(x)}{\sigma^2\left(1+\beta(x)\right)}dx.
\label{eq:interestingequation}
\end{equation}

Differentiating \eqref{eq:gammmse} with respect to $\sigma^2$, we obtain
\begin{equation}
\frac{\partial\Gammmse}{\partial\sigma^2}=\log_2(e)\int_0^\alpha\frac{1}{1+\beta(x)}\frac{\partial\beta}{\partial{\sigma^2}}(x)dx.
\label{eq:derivemmse}
\end{equation}
Let $\pi(x)=\frac{1}{\sigma^2(1+\beta(x))}$.
From \eqref{eq:interestingequation} and \eqref{eq:derivemmse}, we obtain
\begin{multline}
\frac{\partial\Gamopt}{\partial\sigma^2}-\frac{\partial\Gammmse}{\partial\sigma^2}\\=-\log_2(e)\int_0^\alpha\left(\beta(x)+\sigma^2\frac{\partial\beta}{\partial{\sigma^2}}(x)\right)\pi(x)dx.
\label{eq:diff}
\end{multline}
From \eqref{eq:implicit:beta}, we have
\begin{align*}
&\int_0^\alpha \sigma^2\beta(x)\frac{\partial\pi}{\partial\sigma^2}(x)dx\\
&=\int_0^\alpha\int_0^1\frac{\sigma^2\rho(f,x)df}{\sigma^2+\int_0^\alpha\sigma^2\rho(f,y)\pi(y)dy}\frac{\partial\pi}{\partial\sigma^2}(x)dx\\
&=\int_0^1\frac{\int_0^\alpha\rho(f,x)\frac{\partial\pi}{\partial\sigma^2}(x)dx}{1+\int_0^\alpha\rho(f,y)\pi(y)dy}df\\
&=\frac{1}{\log_2(e)}\frac{\partial}{\partial\sigma^2}\int_0^1\log_2\left(1+\int_0^\alpha\rho(f,y)\pi(y)dy\right)df.
\end{align*}
Observing that
\begin{multline*}
\int_0^\alpha\left(\beta(x)+\sigma^2\frac{\partial\beta}{\partial{\sigma^2}}(x)\right)\pi(x)+\sigma^2\beta(x)\frac{\partial\pi}{\partial\sigma^2}(x)dx\\=\frac{\partial}{\partial\sigma^2}\int_0^\alpha\sigma^2\beta(x)\pi(x)dx
\end{multline*}
we obtain \eqref{eq:optmmseeq} from Prop. \ref{prop:opt}.

\subsection{Influence of Other Players' Strategies}
\label{sec:proof:large}

We want to prove that asymptotically, in the game
$\{S^K,\PA,(u_k)_{k\in S^K}\}$, the strategy of a single player does
not have any influence on the payoff of the other players. In other
words, for all $k\neq i\in S^K$, for all
$\p=(P_1,\dotsc,P_K)\in\PA^K$, for all $P'_i\in\PA$,
\begin{equation*}
\abs{u_k(\p)-u_k(P'_i,\p_{(-i)})}\to0,\ \textrm{as}\ N\to\infty.
\end{equation*}

Remember that $u_k = \frac{\Gami(\beta_k)}{P_k}$, and $\Gami$ is at least $C^2$. 
Let $(\beta_1,\dotsc,\beta_K)$ be the SINRs associated with the power allocation $\p$ and 
$(\beta'_1,\dotsc,\beta'_K)$ the SINRs associated with the power allocation $(P'_i,\p_{(-i)})$. 
Then a simple Taylor expansion of $\Gami$ in $\beta'_k$ gives
\begin{equation}
\Gami(\beta'_k)=\Gami(\beta_k)+(\beta'_k-\beta_k)\frac{\partial\Gami}{\partial\beta}(\beta_k)+o(\beta'_k-\beta_k).
\label{eq:dlGami}
\end{equation}
According to \eqref{eq:dlGami}, it is sufficient to show that 
\begin{equation}
\abs{\frac{\beta'_k-\beta_k}{P_k}}\to0,\ \textrm{as}\ N\to\infty. 
\label{eq:toprove}
\end{equation}

\paragraph{Matched Filter}
For the matched filter, the inequality is obtained directly from
\eqref{eq:finite:betamf}.  The denominator of \eqref{eq:finite:betamf}
is always greater than
$\frac{\sigma^2}{N}\sum_{n=1}^N\abs{\dd{k}{n}}^2$. Hence,
\begin{align*}
\abs{\frac{\beta'_k-\beta_k}{P_k}}&\leq\abs{\frac{P_k\frac{1}{N}(P'_i-P_i)\frac{1}{N}\sum_{n=1}^N\abs{\dd{i}{n}}^2\abs{\dd{k}{n}}^2}{P_k\sigma^4}}\\
&\leq\frac{\Pmax\hmax^2}{\sigma^4 N}.
\end{align*}

\paragraph{MMSE Filter}
For the MMSE filter, the inequality is obtained from
\eqref{eq:exactsinrmmse}, Lemma 1 from \cite{ieeeIT.tse00} and Lemma
2.1 from \cite{anlpro.bai07}, which we both reproduce below for
convenience.
\begin{lemma} \label{lemmabai} \cite{ieeeIT.tse00} 
Let $\HC$ be a $N\times N$ complex matrix with uniformely bounded
spectral radius for all $N$: $\sup_N(\abs{\HC})<\infty$. Let
$\ww=\frac{1}{\sqrt{N}}[w_1,\dotsc,w_N]^T$ where $\{w_i\}_{i=1\dotsc
N}$ are i.i.d.~complex random variables with zero mean, unit variance
and finite eighth moment. Then:
\begin{equation*}
\Expect{}{\abs{\ww^H\HC\ww-\frac{1}{N}\tr{\HC}}^4}\leq\frac{C}{N^2}
\end{equation*}
where $C$ is a constant that does not depend on $N$ or $\HC$. 
\end{lemma}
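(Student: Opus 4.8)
The plan is to expand the quadratic form, isolate the centered quantity, split it into a diagonal and an off-diagonal part, and bound each fourth moment by a moment-counting (pairing) argument. Write $c_{ij}$ for the entries of $\HC$. Since $\ww^H\HC\ww=\frac1N\sum_{i,j}\bar w_i c_{ij}w_j$ and $\Expect{}{\bar w_i w_j}=\delta_{ij}$ (the $w_i$ are independent, zero-mean, unit-variance), the mean of the form is exactly $\frac1N\sum_i c_{ii}=\frac1N\tr\HC$. Hence the centered quantity $X:=\ww^H\HC\ww-\frac1N\tr\HC$ splits as $X=A+B$ with a diagonal part $A=\frac1N\sum_i c_{ii}(\abs{w_i}^2-1)$ and an off-diagonal part $B=\frac1N\sum_{i\neq j}\bar w_i c_{ij}w_j$. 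Using $\abs{A+B}^4\le 8(\abs{A}^4+\abs{B}^4)$, it then suffices to bound $\Expect{}{\abs{A}^4}$ and $\Expect{}{\abs{B}^4}$ separately by $C/N^2$. Throughout I would use the boundedness of $\abs{\HC}$ together with $\abs{c_{ij}}\le\abs{\HC}$ and the Frobenius bound $\sum_{i,j}\abs{c_{ij}}^2=\tr(\HC\HC^H)\le N\abs{\HC}^2$.

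For the diagonal part, $A$ is $\frac1N$ times a sum of independent, mean-zero variables $c_{ii}(\abs{w_i}^2-1)$. Expanding $\Expect{}{\abs{A}^4}$ as a quadruple sum, independence and the zero-mean property annihilate every term in which some index occurs only once, leaving the all-equal terms $\frac1{N^4}\sum_i\abs{c_{ii}}^4\Expect{}{(\abs{w_i}^2-1)^4}$ and the two-pairs terms where two indices coincide in pairs. The factor $\Expect{}{(\abs{w_i}^2-1)^4}$ is finite precisely because $w_i$ has a finite eighth moment, which is where that hypothesis enters. With $\abs{c_{ii}}\le\abs{\HC}$ the all-equal sum runs over $O(N)$ terms and is $O(N^{-3})$, while the two-pairs sum is bounded in magnitude by $\frac{C}{N^4}\big(\sum_i\abs{c_{ii}}^2\big)^2\le\frac{C}{N^4}(N\abs{\HC}^2)^2=O(N^{-2})$, so $\Expect{}{\abs{A}^4}=O(N^{-2})$.

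For the off-diagonal part I would expand $\Expect{}{\abs{B}^4}=\frac1{N^4}\Expect{}{B^2\bar B^2}$ into a sum over eight indices (four pairs $(i_k,j_k)$ with $i_k\neq j_k$) of $c\cdot c\cdot\bar c\cdot\bar c$ times an expectation of eight $w$-factors. Because the $w$'s are independent and mean-zero, a configuration survives only if each distinct index value occupies at least two of the eight slots, so at most four distinct indices are free. The crucial point is that the matrix entries tie these indices together: the dominant surviving configuration pairs each $B$-factor with a $\bar B$-factor and produces $\big(\sum_{i\neq j}\abs{c_{ij}}^2\big)^2$, which by the Frobenius bound is $O(N^2)$ rather than $O(N^4)$, whereas every degenerate configuration with three or fewer distinct indices has strictly fewer free summation indices and is $O(N)$ or smaller, using no moment of $w$ beyond the eighth. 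I would also note the non-proper complex subtlety: pairings producing pseudo-variances $\Expect{}{w_i^2}$ need not vanish, but they are bounded by $\Expect{}{\abs{w_i}^2}=1$ and attach to sums such as $\sum_{i\neq j}c_{ij}\bar c_{ji}$, which Cauchy--Schwarz again controls by $\sum_{i\neq j}\abs{c_{ij}}^2$, so the order is unchanged. Multiplying the $O(N^2)$ bound by the $1/N^4$ prefactor gives $\Expect{}{\abs{B}^4}=O(N^{-2})$.

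The \emph{main obstacle}, and the step I would treat most carefully, is the bookkeeping of the last paragraph: correctly enumerating the admissible pairings of the eight $w$-factors under the constraints $i_k\neq j_k$, confirming that the leading term is governed by the squared Frobenius norm $\big(\sum_{i,j}\abs{c_{ij}}^2\big)^2=O(N^2)$ and not by four independent index sums, and verifying that each degenerate configuration is genuinely lower order and invokes a moment of $w$ no higher than the eighth. Combining the two estimates through $\abs{A+B}^4\le 8(\abs{A}^4+\abs{B}^4)$ then yields $\Expect{}{\abs{X}^4}\le C/N^2$ with a constant $C$ depending only on the bound for $\abs{\HC}$ and on $\Expect{}{\abs{w}^8}$, hence not on $N$ nor on the particular matrix $\HC$.
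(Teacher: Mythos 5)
First, note that the paper does not prove this lemma at all: it is imported verbatim from Evans and Tse \cite{ieeeIT.tse00}, so there is no in-paper proof to compare against. Your plan is the standard proof of this ``trace lemma'' (the same moment-expansion argument used in \cite{ieeeIT.tse00} and in Bai--Silverstein): the centering computation, the split into diagonal and off-diagonal parts, the bound $\abs{A+B}^4\le 8(\abs{A}^4+\abs{B}^4)$, the pairing/counting of surviving index configurations, the correct identification of where the finite eighth moment enters, and the remark about the pseudo-variances $\Expect{}{w_i^2}$ are all sound.

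One step is genuinely too quick as written. For the degenerate off-diagonal configurations (three or fewer distinct index values) you assert they are ``$O(N)$ or smaller'' because they have ``strictly fewer free summation indices.'' Fewer free indices alone does not give this: with three free indices and only the entrywise bound $\abs{c_{ij}}\le\norm{\HC}$ you get $O(N^3)$ surviving terms, hence $O(N^{-1})$ after the $N^{-4}$ prefactor, which is not good enough. What saves these terms is again the operator-norm structure: for each fixed $i$ one has $\sum_j\abs{c_{ij}}^2=(\HC\HC^H)_{ii}\le\norm{\HC}^2$ (and similarly for columns), so for instance $\sum_{a,b,c}\abs{c_{ab}}^2\abs{c_{ac}}^2=\sum_a\bigl((\HC\HC^H)_{aa}\bigr)^2\le N\norm{\HC}^4$; every degenerate configuration collapses in this way, and only then is it $O(N)$. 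You flag this bookkeeping as the main obstacle, which is the right instinct, but the justification you offer for it is the one piece that would not survive scrutiny as stated. A final cosmetic point: the lemma as quoted says ``spectral radius,'' whereas your proof (and the lemma itself) requires a bound on the spectral \emph{norm}; this is harmless here because the matrices to which the lemma is applied in the paper are Hermitian nonnegative definite, for which the two coincide.
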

\begin{lemma} \label{lemmabai2} \cite{anlpro.bai07} Let $\sigma^2>0$, ${\bf A}$ and ${\bf B}$ $N\times N$ with ${\bf B}$ Hermitian nonnegative definite, and ${\bf q}\in\CC^N$. 
Then \begin{equation*}
\tr\left(\left(({\bf B}+\sigma^2\Id)^{-1}-({\bf B}+{\bf q}{\bf q}^H+\sigma^2\Id)^{-1}\right){\bf A}\right)\leq\frac{\norm{{\bf A}}}{\sigma^2}. 
\end{equation*}
\end{lemma}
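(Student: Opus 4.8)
The plan is to reduce the difference of the two resolvents to a rank-one perturbation and then evaluate the trace explicitly. Write ${\bf M}={\bf B}+\sigma^2\Id$, which is Hermitian positive definite with ${\bf M}\succeq\sigma^2\Id$ because ${\bf B}$ is Hermitian nonnegative definite, so that the second resolvent is $\left({\bf M}+{\bf q}{\bf q}^H\right)^{-1}$. The Sherman--Morrison (matrix inversion) identity then gives
\begin{equation*}
{\bf M}^{-1}-\left({\bf M}+{\bf q}{\bf q}^H\right)^{-1}=\frac{{\bf M}^{-1}{\bf q}{\bf q}^H{\bf M}^{-1}}{1+{\bf q}^H{\bf M}^{-1}{\bf q}},
\end{equation*}
which exhibits the difference as a rank-one, Hermitian nonnegative definite matrix, the denominator being at least $1$ since ${\bf q}^H{\bf M}^{-1}{\bf q}\geq0$.

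Next I would use the cyclic property of the trace to collapse this into a scalar quadratic form. Setting ${\bf v}={\bf M}^{-1}{\bf q}$,
\begin{equation*}
\tr\left(\left({\bf M}^{-1}-\left({\bf M}+{\bf q}{\bf q}^H\right)^{-1}\right){\bf A}\right)=\frac{{\bf q}^H{\bf M}^{-1}{\bf A}{\bf M}^{-1}{\bf q}}{1+{\bf q}^H{\bf M}^{-1}{\bf q}}=\frac{{\bf v}^H{\bf A}{\bf v}}{1+{\bf q}^H{\bf M}^{-1}{\bf q}}.
\end{equation*}
Since ${\bf A}$ is not assumed Hermitian the numerator may be complex, so the bound I actually establish is on the modulus, $\abs{{\bf v}^H{\bf A}{\bf v}}\leq\norm{{\bf A}}\,\norm{{\bf v}}^2=\norm{{\bf A}}\,{\bf q}^H{\bf M}^{-2}{\bf q}$, which dominates the stated quantity.

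The remaining step is to control $\dfrac{{\bf q}^H{\bf M}^{-2}{\bf q}}{1+{\bf q}^H{\bf M}^{-1}{\bf q}}$ by $\sigma^{-2}$. From ${\bf M}\succeq\sigma^2\Id$ one gets ${\bf M}^{-1}\preceq\sigma^{-2}\Id$, and conjugating both sides by ${\bf M}^{-1/2}$ yields ${\bf M}^{-2}\preceq\sigma^{-2}{\bf M}^{-1}$; hence ${\bf q}^H{\bf M}^{-2}{\bf q}\leq\sigma^{-2}{\bf q}^H{\bf M}^{-1}{\bf q}$. Writing $t={\bf q}^H{\bf M}^{-1}{\bf q}\geq0$, the ratio is at most $\sigma^{-2}\,t/(1+t)\leq\sigma^{-2}$, which closes the argument and delivers the claimed bound $\norm{{\bf A}}/\sigma^2$.

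I expect no serious obstacle: the whole estimate follows from Sherman--Morrison together with the operator-norm inequality $\abs{{\bf v}^H{\bf A}{\bf v}}\leq\norm{{\bf A}}\norm{{\bf v}}^2$. The only points requiring care are that ${\bf A}$ need not be Hermitian, so the inequality should be read for the modulus (or real part) of the trace, and the operator-monotonicity step ${\bf M}^{-2}\preceq\sigma^{-2}{\bf M}^{-1}$, which relies crucially on ${\bf B}\succeq0$ to guarantee ${\bf M}\succeq\sigma^2\Id$.
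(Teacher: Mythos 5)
Your proof is correct. Note that the paper itself does not prove this lemma: it is imported verbatim from the cited reference of Bai and Silverstein, so there is no in-paper argument to compare against. Your derivation is in fact the standard proof of that result: the Sherman--Morrison identity exhibits $({\bf B}+\sigma^2\Id)^{-1}-({\bf B}+{\bf q}{\bf q}^H+\sigma^2\Id)^{-1}$ as the rank-one nonnegative matrix ${\bf M}^{-1}{\bf q}{\bf q}^H{\bf M}^{-1}/(1+{\bf q}^H{\bf M}^{-1}{\bf q})$ with ${\bf M}={\bf B}+\sigma^2\Id$, cyclicity of the trace collapses the expression to a quadratic form, and the chain ${\bf q}^H{\bf M}^{-2}{\bf q}\leq\sigma^{-2}{\bf q}^H{\bf M}^{-1}{\bf q}$ followed by $t/(1+t)\leq 1$ gives the bound. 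Your two caveats are also well placed: since ${\bf A}$ is not assumed Hermitian the estimate is properly read as a bound on the modulus of the trace (which is how the source states it), and the step ${\bf M}^{-2}\preceq\sigma^{-2}{\bf M}^{-1}$ genuinely uses ${\bf B}\succeq 0$. In the paper's application ${\bf A}=\HH_k\HH_k^H$ is Hermitian nonnegative, so the trace is real and nonnegative there and the stated one-sided inequality suffices.
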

In Lemma \ref{lemmabai2}, $\norm{{\bf A}}$ is the spectral norm of ${\bf A}$,
i.e., the square root of the largest singular value of ${\bf A}$.

From (7), we can write
\begin{gather*}
\beta_k = P_k \WWu{k}^H \HH_{k}^H \left(\Gu{k}\GuH{k}+\sigma^2\Id_N\right)^{-1} \HH_{k} \WWu{k},\\
\beta'_k = P_k \WWu{k}^H \HH_{k}^H\left({\Gu{k}}'{\GuH{k}}'+\sigma^2\Id_N\right)^{-1}\HH_{k} \WWu{k}
\end{gather*}
where
\begin{multline*}
{\Gu{k}}'{\GuH{k}}'=\Gu{k}\GuH{k}\\+(P'_i-P_i)(\HHu{i}\odot\WWu{i})(\HHu{i}\odot\WWu{i})^H.
\end{multline*}

A corollary of Lemma \ref{lemmabai} is that for
either matrix $\HC=\HH_{k}^H
\left(\Gu{k}\GuH{k}+\sigma^2\Id_N\right)^{-1} \HH_{k}$ or matrix
$\HC=\HH_{k}^H\left({\Gu{k}}'{\GuH{k}}'+\sigma^2\Id_N\right)^{-1}\HH_{k}$,
we obtain \cite{ieeeIT.tse00}
\begin{equation*}
\abs{\WWu{k}^H \HC \WWu{k}-\frac{1}{N}\tr{\HC}}\to0,\ \textrm{as}\ N\to\infty.
\end{equation*}

Matrix ${\bf B}=\Gu{k}\GuH{k}$ is Hermitian nonnegative definite, as for all $\ww\in\CC^N$, $\ww^H\Gu{k}\GuH{k}\ww=\norm{\Gu{k}\ww}^2 \geq 0$.
Diagonal matrix ${\bf A}=\HH_{k}\HH_{k}^H$ has spectral norm $\norm{\HH_{k}\HH_{k}^H}\leq\hmax^2$. 
Using Lemmas \ref{lemmabai} and \ref{lemmabai2}, as $N\to\infty$, we obtain
\begin{equation*}
\abs{\frac{\beta'_k-\beta_k}{P_k}}\to0,\ \textrm{as}\ N\to\infty.
\end{equation*} 

\paragraph{Optimum and Successive Interference Cancellation Filters}
The analog of the SINR derived for the optimum filter stems from the
MMSE filter with SIC. The SINR for SIC filters have similar
expressions with less interfering users appearing in the
denominator. Hence the result is immediate.

\subsection{Proof of Prop. \ref{prop:optflat}}
\label{sec:proof:optflat}

Given $\Gamoptstar$, we can use \eqref{eq:optmmseeq} to obtain a Nash equilibrium power allocation in the 
following way. We rewrite \eqref{eq:optmmseeq} assuming that the target SINR for the MMSE filter is $\betaplus$.  
\begin{multline}
\alpha\log_2\left(1+\betaplus\right)-\alpha\log_2(e)\frac{\betaplus}{1+\betaplus}\\+\log_2\left(1+\frac{1}{\sigma^2\left(1+\betaplus\right)}\int_0^\alpha P(y)\abs{h(y)}^2dy\right)\\ = \alpha\log_2\left(1+\betastar\right). 
\label{eq:optmmsebetaplus}
\end{multline}
In the left-hand side of \eqref{eq:optmmsebetaplus}, $P(y)$ is given by a  MMSE power allocation similar to the one given by \eqref{eq:flatminpower}. 
Hence, the term $\int_0^\alpha P(y)\abs{h(y)}^2dy$ in \eqref{eq:optmmsebetaplus} does not depend on the actual realizations of the channels. 
Replacing $\betastar$ by  $\betaplus$ in \eqref{eq:flatdiscrete}, we obtain that  
$\int_0^\alpha P(y)\abs{h(y)}^2dy=\frac{\alpha\sigma^2\betaplus}{1-\alpha\frac{\betaplus}{1+\betaplus}}$, which gives us \eqref{eq:optmmsebetaplusprop}. 
Replacing $\betastar$ by  $\betaplus$ in \eqref{eq:flatminpower}, we obtain the power allocation \eqref{eq:optflatminpower}. 

\subsection{Expectation of the random variable \eqref{eq:converge}}
\label{sec:proof:converge}

For each user $j$, there are $L>1$ paths with respective attenuations 
$h_{\ell}\left(\frac{j}{N}\right),\ \ell=1,\dotsc,L$, which are i.i.d.~complex random variables
with mean zero and even distributions of the real and imaginary parts. The Fourier transform of those attenuations is
$\dd{j}{n}=h\left(\frac{n}{N},\frac{j}{N}\right)=\sum_{\ell=1}^{L}h_{\ell}\left(\frac{j}{N}\right)e^{-2\pi i \frac{n}{N} (\ell-1)}$.
The total energy of the paths is $\E_j = \sum_{\ell=1}^L \abs{h_{\ell}\left(\frac{j}{N}\right)}^2$.

We want to show that the expectation of the random variable
$\frac{1}{K}\sum_{j=1}^K\frac{\abs{\dd{j}{n}}^2}{\E_j}$ is equal to 1.
By expanding the expression of $\dd{j}{n}$, this is equivalent to showing that the expectation of the random variable
\begin{equation*}
\frac{h_{\ell}\left(\frac{j}{N}\right)h_{\ell'}\left(\frac{j'}{N}\right)}{\E_j}
\end{equation*}
is equal to 0. Denoting by $p(\cdot)$ the distribution of $h_{\ell}=h_{\ell}\left(\frac{j}{N}\right)$, this expectation is equal to the $L$-dimensional integral of
\begin{equation*}
\frac{h_{\ell}h_{\ell'}}{\abs{h_{\ell}}^2+\abs{h_{\ell'}}^2+\sum_{k\neq\ell,\ell'}\abs{h_{k}}^2}p\left(h_{\ell}\right)p\left(h_{\ell'}\right)\prod_{k\neq\ell,\ell'}p\left(h_{k}\right)
\end{equation*}
which is an odd function of $h_{\ell}$. Its integral is therefore 0, which proves the desired result. 

\subsection{Proof of \eqref{eq:pamfsicflat} and \eqref{eq:pammsesicflat}}
\label{sec:PAsicflat}

Denote $m_k = P_{K-k}\abs{h_{K-k}}$. From \eqref{eq:finite:pamfsic}, with flat fading, 
the sequence $\{m_k\}_{k\in S^K}$ satisfies $m_0=\betastar\sigma^2$ and $m_{k+1}=\betastar\sigma^2+\frac{\betastar}{N}\sum_{j=0}^k m_j$.
Using the fact that $\sum_{i=j}^k\binom{i}{j}=\binom{k+1}{j+1}$, it is immediate to prove by recurrence that 
\begin{equation*}
m_{k}=\betastar\sigma^2\sum_{j=0}^{k}\binom{k}{j}\frac{1}{N^j}{\betastar}^j=\betastar\sigma^2\left(1+\frac{1}{N}\betastar\right)^k.
\end{equation*}
Hence formula \eqref{eq:pamfsicflat}. The demonstration is exactly
similar for \eqref{eq:pammsesicflat} from the recursion $m_0=\betastar\sigma^2$ and
$m_{k+1}=\betastar\sigma^2+\frac{\betastar}{\left(1+\betastar\right)N}\sum_{j=0}^k
m_j$.

\subsection{Optimal Ordering of Users}
\label{sec:ordering}

We determine the ordering that makes use of the least total power for
equilibrium PA \eqref{eq:pamfsicflat} (the case is similar for
\eqref{eq:pammsesicflat}, \eqref{eq:pamfsic} and \eqref{eq:pammsesic}).
Let the ordering of the users be such as
$\abs{h_1}^2<\dotsb<\abs{h_K}^2$. Let $\pi$ be any permutation
of $\{1,\dotsc,K\}$.  Let
$a_{ij}=\left(1+\frac{1}{N}\betastar\right)^{K-i}-\left(1+\frac{1}{N}\betastar\right)^{K-j}$.

Then showing that the optimal ordering is such as $\abs{h_1}^2<\dotsb<\abs{h_K}^2$ is equivalent to showing that for any $\pi$
\begin{equation}
\sum_{k=1}^K \frac{1}{\abs{h_k}^2} a_{k\pi(k)}> 0.
\label{eq:equivalence}
\end{equation}

Consider first a cyclic permutation. By the definition of $a_{ij}$, the sum of the
$a_{k\pi(k)}$ is equal to zero:
$\sum_{k=1}^K a_{k\pi(k)}=0$. The first coefficient $a_{1\pi(1)}$ is
positive. It is affected coefficient $\frac{1}{\abs{h_1}^2}$, which is
the greatest coefficient in the sum in \eqref{eq:equivalence}. Hence
the sum in \eqref{eq:equivalence} is positive in this case.

Permutation $\pi$ can be decomposed as a product of disjoint
permutation cycles. Each cycle determines a subset of indexes $k$,
these subsets form a partition of $\{1,\dotsc,K\}$. With a similar
reasoning as precedently, replacing index $1$ with the smallest index
in the cycle, the sum over the indexes $k$ pertaining to a cycle of
$\frac{1}{\abs{h_k}^2} a_{k\pi(k)}$ is positive. Hence the global sum of
\eqref{eq:equivalence} is also positive.

It can be proven in a similar way that the same ordering maximizes the
sum of inverse powers of the users.

\section{Acknowledgements}

The authors would like to thank Prof. J. Silverstein for pointing us to reference \cite{anlpro.bai07}.



\end{document}